\documentclass[11pt,a4paper]{article}

\usepackage[margin=1in]{geometry}
\usepackage[T1]{fontenc}
\usepackage{lmodern}
\usepackage{amsmath,amssymb,amsthm}
\usepackage{graphicx}
\usepackage{xspace}
\usepackage{float}
\usepackage{pifont}
\usepackage{tabularx}
\usepackage{multirow}
\usepackage{subcaption}
\usepackage{xcolor}
\usepackage[hyphens]{url}
\usepackage[hidelinks]{hyperref}
\usepackage{aliascnt}
\usepackage[capitalise,noabbrev]{cleveref}

\newtheorem{theorem}{Theorem}
\newaliascnt{lemma}{theorem}
\newtheorem{lemma}[lemma]{Lemma}
\aliascntresetthe{lemma}
\newaliascnt{conjecture}{theorem}
\newtheorem{conjecture}[conjecture]{Conjecture}
\aliascntresetthe{conjecture}
\newaliascnt{observation}{theorem}
\newtheorem{observation}[observation]{Observation}
\aliascntresetthe{observation}
\theoremstyle{definition}
\newaliascnt{definition}{theorem}
\newtheorem{definition}[definition]{Definition}
\aliascntresetthe{definition}
\crefname{conjecture}{Conjecture}{Conjectures}
\Crefname{conjecture}{Conjecture}{Conjectures}
\crefname{observation}{Observation}{Observations}
\Crefname{observation}{Observation}{Observations}
\DeclareRobustCommand{\rev}[1]{#1}

\title{Connectivity Bounds for GHZ Graphs}
\author{L. Sunil Chandran\thanks{Indian Institute of Science, Bengaluru, India.}
\and Rishikesh Gajjala\thanks{Center for Quantum and Topological Systems, New York University Abu Dhabi, United Arab Emirates.}}
\date{}
\hypersetup{
  pdftitle={Connectivity Bounds for GHZ Graphs},
  pdfauthor={L. Sunil Chandran, Rishikesh Gajjala},
  pdfsubject={Author version accepted at FSTTCS 2026},
  pdfkeywords={GHZ graphs, graph connectivity, perfect matchings, quantum optics}
}

\begin{document}
\maketitle

\begin{abstract}
Greenberger--Horne--Zeilinger (GHZ) states are multipartite entangled quantum states that play a central role in quantum information theory, with applications in quantum communication, cryptography, and distributed quantum computing. A major experimental goal is to create high-dimensional GHZ states using multiple entangled parties. Krenn, Gu and Zeilinger discovered a surprising bridge between these quantum-optical experiments and graph theory: certain photonic experiments can be represented by edge-coloured, edge-weighted graphs, and the dimension of the produced GHZ state equals the number of feasible monochromatic inherited vertex colourings.

{This led Krenn and Gu to conjecture that every GHZ graph with more than four vertices has dimension at most $2$. A counterexample would give a new high-dimensional multipartite GHZ construction, while a proof would clarify fundamental resource limitations in quantum optics. It is known that the conjecture holds for graphs of vertex connectivity at most $2$ [MFCS 2024], and that simple GHZ graphs with $n>4$ vertices have dimension less than $n/\sqrt{2}$ [Quantum 2024]. However, no general upper bound depending only on vertex connectivity was known.} We prove that every simple connected GHZ graph $G$ with at least four vertices has dimension at most $
    2^{\kappa(G)-1}+\binom{\kappa(G)}{2},$
where $\kappa(G)$ is the vertex connectivity of $G$. We also prove new bounds in terms of edge connectivity and maximum degree. 

\end{abstract}

\newcommand{\cmark}{\ding{51}} % check mark
\newcommand{\xmark}{\ding{55}} % x mark

%%%%%%%%%%%%%%%%%%%%%%%%%%%%%%%%%%%%%%%%%%%%%%%%%%%%%%

\section{Introduction}
\label{sec:intro}
Quantum entanglement is one of the most remarkable phenomena predicted by quantum physics: measurements on spatially separated particles can exhibit correlations that cannot be explained by local realistic theories~\cite{PhysRev.47.777}. Greenberger, Horne and Zeilinger asked what happens when more than two particles are entangled~\cite{Greenberger1989}. The resulting Greenberger--Horne--Zeilinger (GHZ) states, such as
\[
    |GHZ_{3,2}\rangle=\frac{1}{\sqrt{2}}\left(|000\rangle+|111\rangle\right),
\]
were experimentally observed and used to rule out local-realistic descriptions~\cite{PhysRevLett.82.1345,Pan2000}. Today, GHZ states are central objects in quantum information theory, with applications in quantum secret sharing~\cite{PhysRevA.59.1829}.

{Multipartite entanglement is closely connected to the experimental work for which Anton Zeilinger received the 2022 Nobel Prize in Physics~\cite{nobel}. Increasing both the number of particles and the local dimension of GHZ states is important for foundational tests of quantum mechanics as well as for practical quantum-information tasks. Photonic technology is one of the main experimental platforms for this goal~\cite{Gu2020}. Motivated by this, Krenn, Gu, Zeilinger and co-authors developed graph-based descriptions and automated design methods for quantum-optical experiments~\cite{Quantum_graphs,Quantum_graphs_2,Quantum_graphs_3,krenn2019questions}.}

{The key insight is that a quantum-optical experiment with probabilistic photon sources and linear optics can be encoded by an edge-coloured, edge-weighted graph. Edges represent photon-pair sources, edge colours represent photonic modes, and edge weights represent complex amplitudes. A successful detection event in which every detector receives exactly one photon corresponds to a perfect matching. Thus the quantum question
\[
    \text{``Can one create a high-dimensional GHZ state by such an experiment?''}
\]
becomes a graph-theoretic question about cancellations among weighted perfect matchings.}

{Despite substantial experimental and computational effort, no construction is known that produces a GHZ state of dimension $d>2$ with more than $n=4$ photons, perfect quality, finite count rates, and no additional resources~\cite{krenn2019questions}. This led Krenn and Gu to conjecture that such a construction is impossible. In graph-theoretic terms, their conjecture says that every GHZ graph on more than four vertices has dimension at most $2$; see \Cref{krenn_conjecture}. The conjecture has been widely publicised among graph theorists and quantum-information researchers~\cite{mixon_website,krenn_website}. A counterexample would yield a new high-dimensional multipartite GHZ experiment, while a proof would reveal structural limitations of this photonic model.}

We now state the graph-theoretic formulation. For a high-level explanation of the physical encoding, we refer the reader to Section~2.2 of~\cite{CG1}; for the full quantum-optical model, see~\cite{Quantum_graphs,Quantum_graphs_2,Quantum_graphs_3,krenn2019questions}.

\subsection{Problem definition}
We consider finite loopless undirected multigraphs; parallel edges are allowed, but loops are not.
For a graph $G$, let $V(G)$ and $E(G)$ denote its vertex and edge sets.
For a connected graph $G$, the \emph{edge connectivity} $\lambda(G)$ is the minimum number
of edges whose deletion disconnects $G$, where parallel edges are counted with multiplicity.
We set $\lambda(K_1)=0$.
If $G$ is connected and not complete, the \emph{vertex connectivity} $\kappa(G)$ is the
minimum size of a vertex set $T\subseteq V(G)$ such that $G-T$ is disconnected; for complete
graphs we use the standard convention $\kappa(K_n)=n-1$.
For disconnected graphs, we set $\lambda(G)=\kappa(G)=0$.
We write $\Delta_v(G)$ and $\delta_v(G)$ for the maximum and minimum number of neighbours
of a vertex, and $\Delta_e(G)$ and $\delta_e(G)$ for the corresponding maximum and minimum
numbers of incident edges. For simple graphs, these coincide with the usual maximum and
minimum degrees, denoted $\Delta(G)$ and $\delta(G)$. An edge is \emph{redundant} if it
belongs to no perfect matching. Removing all redundant edges from $G$ gives its
\emph{matching-covered graph}, denoted $mcg(G)$.

An edge colouring $c:E(G)\to\mathbb{N}$ assigns a single colour to each edge; 
physically, the colour represents a photonic mode number. All definitions in this 
paper use this ordinary edge-colouring model. In particular, an edge has one colour, 
not one colour at each endpoint. The more general bichromatic, or half-edge, model 
assigns colours separately to the two ends of an edge and is discussed only when 
comparing with prior work. An edge weight function 
$w:E(G)\to\mathbb{C}\setminus\{0\}$ assigns a non-zero complex amplitude to each 
edge. A graph $G$ together with $c$ and $w$ is denoted by $G_c^w$ and is called an 
edge-coloured, edge-weighted graph. For $e\in E(G)$, $c(e)$ and $w(e)$ denote the 
colour and the weight of $e$, respectively.

\begin{figure}[t!]
    \centering   
    \begin{minipage}{0.9\textwidth}
\centering    
{\includegraphics[width=63mm]{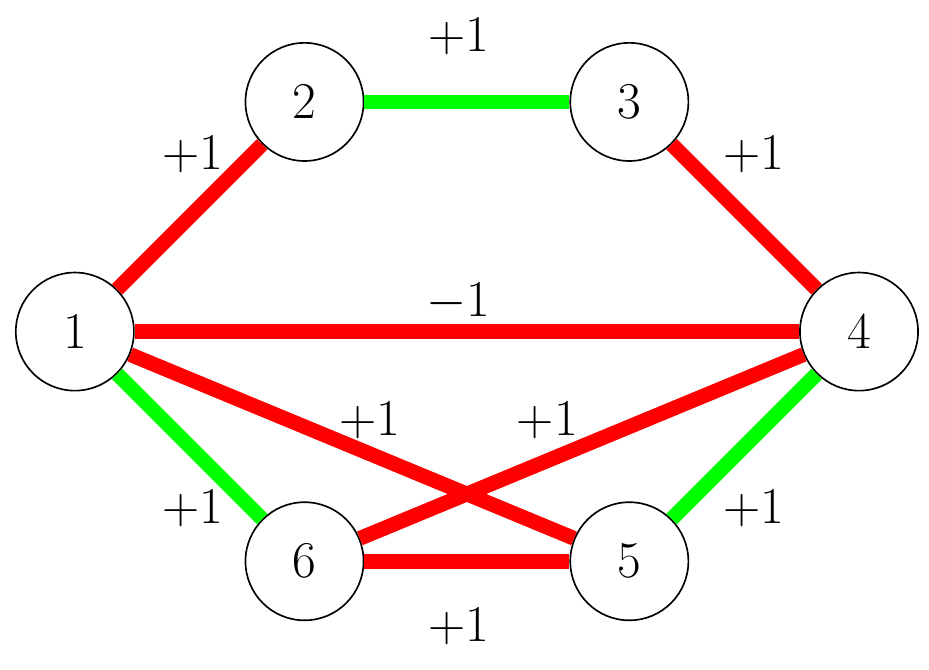}}
\caption{An edge-coloured edge-weighted graph. The edges $(1,6),(4,5),(2,3)$ are of colour green (mode number $1$). Remaining edges are of colour red (mode number $0$).}
\label{fig:main_example}
    \end{minipage}
   \hspace*{1cm}
\end{figure}

\begin{figure}[t!]
    \centering
\centering    
\begin{subfigure}[b]{0.23\textwidth}
         \centering
         \fbox{\includegraphics[width=\dimexpr\linewidth-2\fboxsep-2\fboxrule\relax]{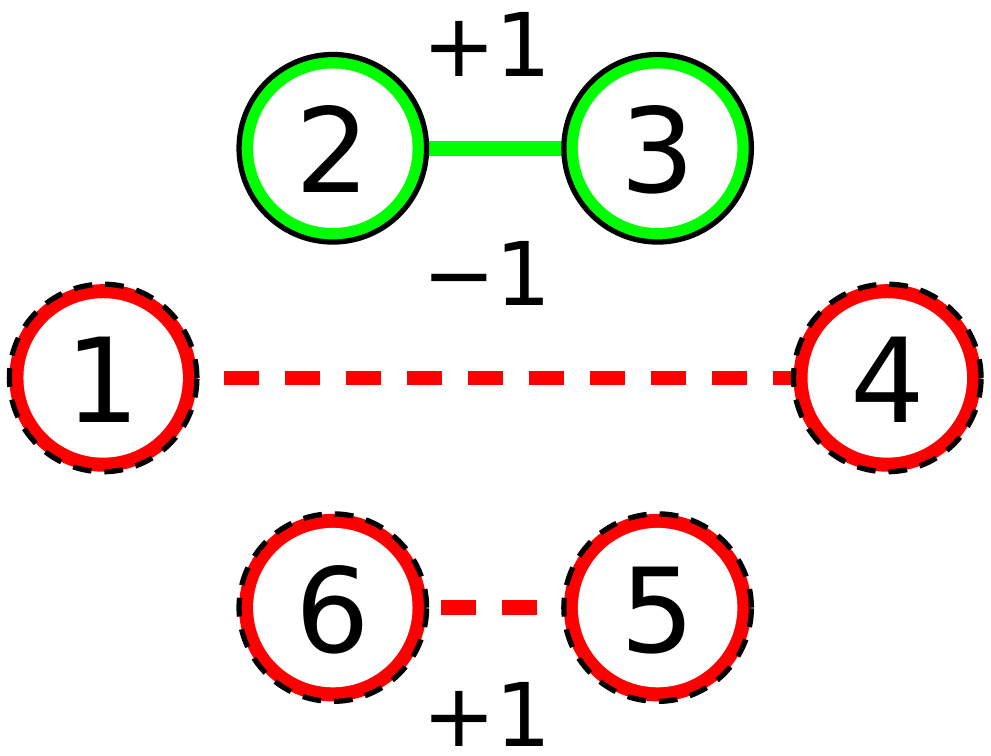}}
         \caption{$|011000\rangle$}
         \label{fig:pm1}
\end{subfigure}
\hfill
\begin{subfigure}[b]{0.23\textwidth}
         \centering
         \fbox{\includegraphics[width=\dimexpr\linewidth-2\fboxsep-2\fboxrule\relax]{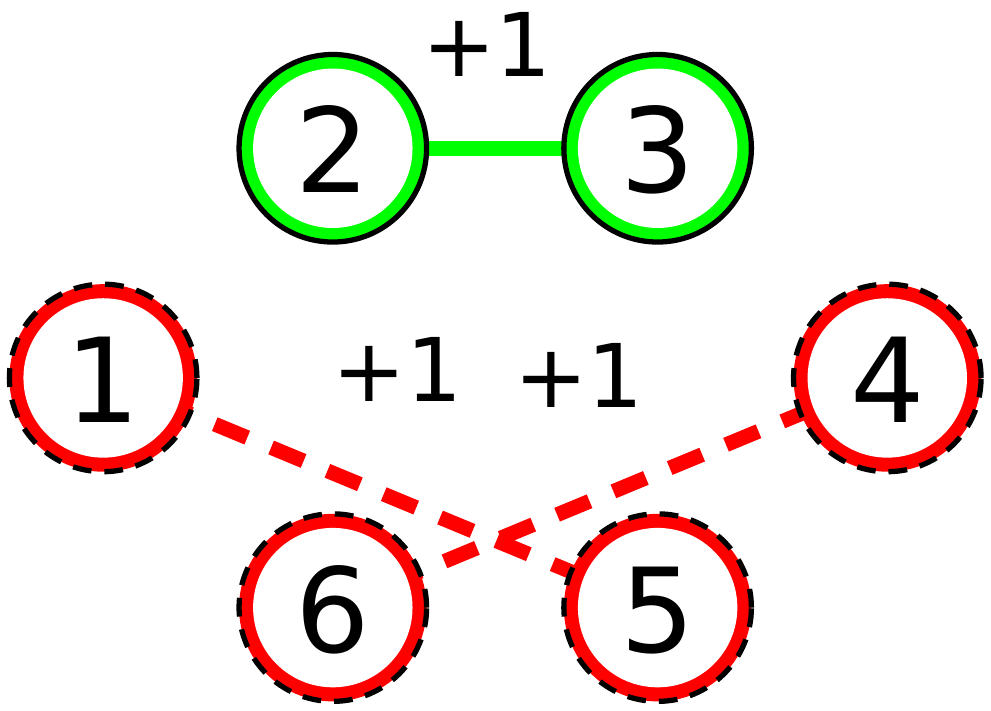}}
         \caption{$|011000\rangle$}
         \label{fig:pm2}
\end{subfigure}
\hfill
\begin{subfigure}[b]{0.23\textwidth}
         \centering
         \fbox{\includegraphics[width=\dimexpr\linewidth-2\fboxsep-2\fboxrule\relax]{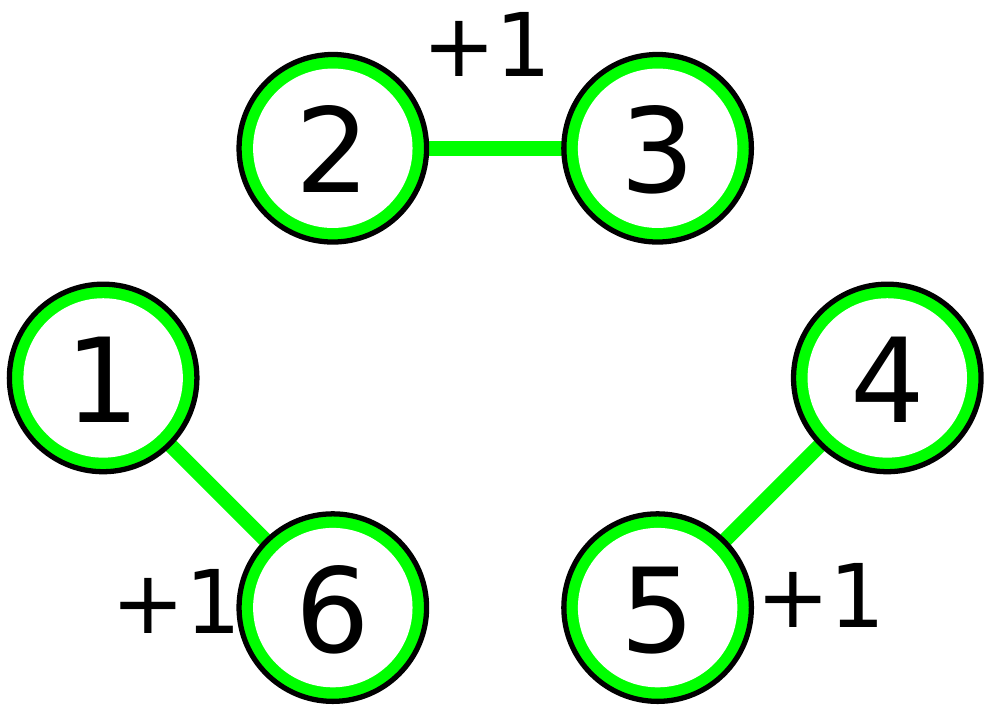}}
         \caption{$|111111\rangle$}
         \label{fig:pm3}
\end{subfigure}
\hfill
\begin{subfigure}[b]{0.23\textwidth}
         \centering
         \fbox{\includegraphics[width=\dimexpr\linewidth-2\fboxsep-2\fboxrule\relax]{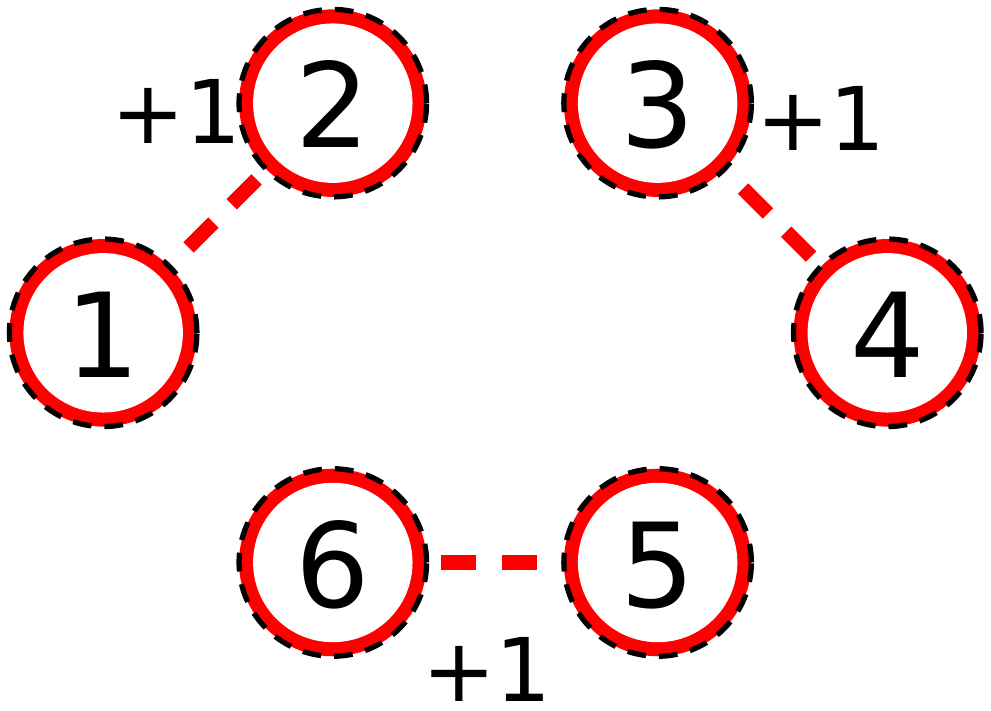}}
         \caption{$|000000\rangle$}
         \label{fig:pm4}
\end{subfigure}

\caption{Inherited vertex colourings of the perfect matchings of \cref{fig:main_example}. Red edges and vertex outlines are dashed; green edges and outlines are solid.}
\label{fig:perfect_matchings}
\end{figure}

{Every perfect matching $M$ of $G_c^w$ induces a vertex colouring $vc_M:V(G)\to\mathbb{N}$ as follows: each vertex receives the colour of the unique edge of $M$ incident with it. In the quantum-optical interpretation, this records the mode detected at each photon detector. We call $vc_M$ an \emph{inherited vertex colouring}. It is \emph{monochromatic} if all vertices receive the same colour. For a vertex colouring $vc$, let $\mathcal{M}_{vc}$ be the set of perfect matchings inducing $vc$. If $\mathcal{M}_{vc}\neq\emptyset$, then $vc$ is \emph{feasible}; otherwise it is \emph{infeasible}.}

\begin{definition}
{The weight of a perfect matching $M$ is
\[
    w(M)=\prod_{e\in M}w(e).
\]}
\end{definition}

\begin{definition}
{The weight of an inherited vertex colouring $vc$ is
\[
    w(vc)=\sum_{M\in\mathcal{M}_{vc}}w(M).
\]
If $\mathcal{M}_{vc}=\emptyset$, we set $w(vc)=0$.}
\end{definition}

\begin{definition}
{An edge-coloured, edge-weighted graph $G_c^w$ is a \emph{GHZ graph} if:
\begin{enumerate}
    \item every feasible monochromatic inherited vertex colouring has weight $1$; and
    \item every non-monochromatic inherited vertex colouring has weight $0$.
\end{enumerate}}
\end{definition}

{Thus, in a GHZ graph, all unwanted non-monochromatic detection patterns cancel by destructive interference, while the desired monochromatic patterns remain with weight $1$. Figure~\ref{fig:perfect_matchings} explains this definition for the graph in Figure~\ref{fig:main_example}. The perfect matchings in Figures~\ref{fig:pm1} and~\ref{fig:pm2} induce the same non-monochromatic inherited vertex colouring $|011000\rangle$, and their weights cancel. The perfect matchings in Figures~\ref{fig:pm3} and~\ref{fig:pm4} induce the monochromatic inherited vertex colourings $|111111\rangle$ and $|000000\rangle$, each with weight $1$. Hence Figure~\ref{fig:main_example} is a GHZ graph of dimension $2$.}

\begin{definition}
The dimension of a GHZ graph $G^w_c$, denoted $\mu(G,c,w)$, is the number of
feasible monochromatic inherited vertex colourings.
\end{definition}

If $G^w_c$ is not a GHZ graph, then $\mu(G,c,w)$ is not defined.  For a
loopless multigraph $G$, let
\[
WC(G)=\{(w,c):G^w_c \text{ is a GHZ graph}\}.
\]
The matching index of $G$ is
\[
\mu(G)=\max_{(w,c)\in WC(G)} \mu(G,c,w),
\]
where the maximum over the empty set is defined to be $0$.

This maximum is taken over the fixed graph $G$ itself: no parallel copies of
edges are added unless they are already present in $G$.  It is easy to see that
\[
\mu(G)=\mu(\operatorname{mcg}(G)).
\]

For a simple graph $H$, if one instead maximizes over all loopless multigraphs
whose skeleton is $H$, we denote the resulting quantity by
$\mu_{\mathrm{skel}}(H)$.  Thus Figure~\ref{fig:k2} shows that
$\mu_{\mathrm{skel}}(K_2)=\infty$.  In the rest of the paper, and in particular
in Theorems~\ref{thm:edge_connectivity}, \ref{thm:vertex_connectivity}, and
\ref{thm:six_eight_vertices}, $\mu(G)$ always denotes the fixed-graph quantity above.

\begin{figure}[t!]
    \centering
\centering    

\begin{subfigure}[b]{0.4\textwidth}
         \centering
         {\includegraphics[width=40mm]{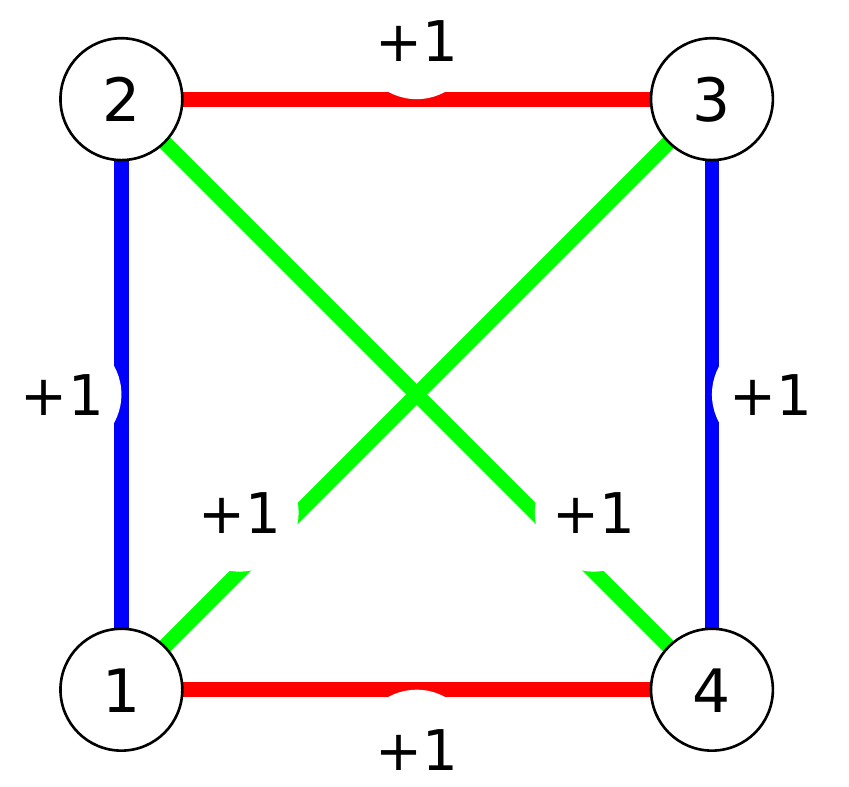}}
         \caption{$K_4$ with $3$ dimensions. $(1,2),(3,4)$ are of colour blue. $(1,3),(2,4)$ are of colour green. $(1,4),(2,3)$ are of colour red.}
         \label{fig:3dimk4}
\end{subfigure}
\hfill
\begin{subfigure}[b]{0.5\textwidth}
         \centering
         {\includegraphics[width=\linewidth]{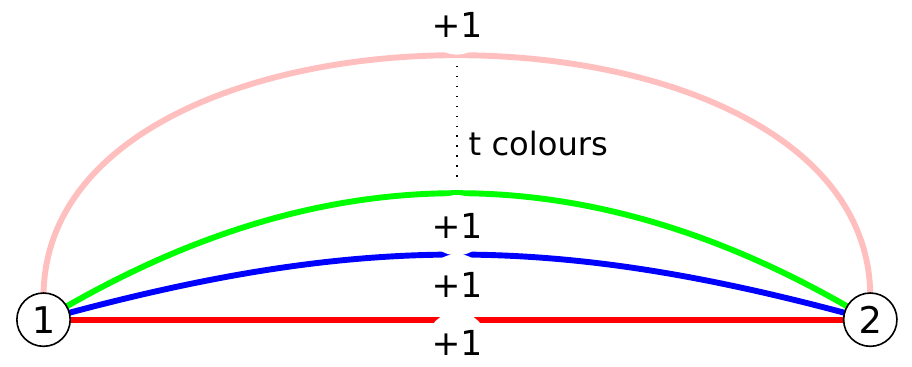}}
         \caption{A two-vertex multigraph with $t$ dimensions.}
         \label{fig:k2}
\end{subfigure}

\caption{GHZ graphs}
\label{fig:perfectly_mono_examples}
\end{figure}

A graph with a perfect matching has an even number of vertices. Most of our bounds concern graphs with an even number of vertices and at least four vertices; the edge-connectivity bound in \Cref{thm:edge_connectivity} also applies to graphs with two vertices. From \cref{fig:3dimk4}, we know that $\mu(K_4)\geq 3$ and, despite the use of huge computational resources \cite{AI1,AI2,neugebauer}, this is the only known simple graph on at least four vertices, up to isomorphism, with matching index at least $3$. Any graph with a matching index of at least $3$ and $n>4$ vertices would lead to a new GHZ state of dimension at least $3$ with $n>4$ entangled particles.
In the ordinary edge-colouring setting considered here, the Krenn--Gu conjecture asserts that 

\begin{conjecture}[{Krenn--Gu Conjecture}]
\label{krenn_conjecture}
For every loopless multigraph $G$ with $|V(G)|>4$, we have $\mu(G)\le 2$.
\end{conjecture}
{This conjecture has been extensively publicized, including through several cash rewards~\cite{mixon_website,krenn_website}. Its resolution would have direct implications for quantum optics. A counterexample would reveal a new destructive-interference mechanism in multi-photon systems and would immediately give a new construction of high-dimensional multipartite GHZ states. A proof, on the other hand, would identify a fundamental limitation of this optical model and would clarify which additional resources are necessary for producing larger high-dimensional GHZ states. Proving the conjecture for broad graph classes is also valuable, since it narrows the search space for computational tools such as PyTheus~\cite{AI2}.}

\subsection{Related work on the conjecture}
Some special cases of \cref{krenn_conjecture} have been studied by prior work. We discuss those in this subsection. 

\textbf{No destructive interference.} For the case of all edge weights being positive real numbers, it is easy to see that there cannot be any non-monochromatic perfect matching. Therefore, all perfect matchings are monochromatic. Bogdanov \cite{bogdanov,Quantum_graphs} observed that for this case, the matching index of $K_4$ is $3$, and for simple graphs non-isomorphic to $K_4$, the matching index is at most $2$. Due to this result, when there is no destructive interference, every simple matching-covered graph with a perfect matching, non-isomorphic to $K_4$, can achieve a maximum dimension of $1$ or $2$ and thus can be classified into Type $1$ and Type $2$ graphs (See \cite{CG1} for detailed discussion). Chandran and Gajjala~\cite{CG1} gave a structural classification for Type $2$ graphs. They further proved that for any half-edge colouring and edge weight assignment on a simple Type $2$ unweighted uncoloured graph, a dimension of $3$ or more cannot be achieved. The computational aspects of the vertex colourings arising from these experiments were studied by Vardi and Zhang \cite{DBLP:journals/corr/abs-2209-13063,VardiZ23}\rev{.}

\textbf{Absence of bichromatic edges or multi-edges.} 
In the full photonic model, the two photons produced by one source may have different 
mode numbers. Graph-theoretically, this is represented by giving the two ends of an 
edge different colours; such an edge is called \emph{bichromatic}, and the resulting 
colouring is sometimes called a \emph{half-edge colouring}. This half-edge model is 
more general than the ordinary edge-colouring model used in Definitions~1--4, where 
each edge has a single colour. The problem becomes easier when bichromatic edges or 
multi-edges are forbidden, and this restriction has led to several lines of work. 
Cervera-Lierta et al.~\cite{CerveraLierta2022designofquantum} used SAT solvers to 
prove that if the number of vertices is $6$ or $8$, the maximum dimension achievable 
is $2$ or, at most, $3$, respectively. They further conjectured that the maximum 
dimension achievable for an $n>4$ vertex graph is less than $\dfrac{n}{2}$ when 
bichromatic edges are absent. Chandran and Gajjala~\cite{CG2} proved that the 
maximum dimension achievable for an $n>4$ vertex simple GHZ graph, even when 
bichromatic edges are allowed, is less than $\dfrac{n}{\sqrt{2}}$.

\textbf{Unrestricted results.} {For the unrestricted setting, where both bichromatic edges and multi-edges are allowed, Chandran, Gajjala and Illickan~\cite{CG3} introduced a graph-reduction framework for the Krenn--Gu conjecture. They used this framework to prove the conjecture for graphs of vertex connectivity at most $2$ and for cubic graphs. Their argument relies on a computer-assisted result of Mantey~\cite{Kevin}, which shows that every graph on four vertices has matching index at most $3$. It remains open to give a short purely combinatorial proof of this four-vertex result.}

\subsection{Our Results}
{Our results give new structural evidence for the Krenn--Gu conjecture by showing that standard graph-connectivity parameters impose strong restrictions on the dimension of GHZ graphs. This is useful from both sides of the graph--physics bridge: it gives new graph-theoretic upper bounds on the matching index, and it rules out broad families of photonic experimental designs as candidates for high-dimensional GHZ-state generation. Such bounds are particularly relevant for computational searches using automated quantum-experiment design tools such as PyTheus~\cite{AI2}.}

Let $\Delta_{v}(G)$ denote the maximum number of neighbours of a vertex over all vertices, and let $\Delta_{e}(G)$ denote the maximum number of edges incident on a vertex over all vertices. Recall that, for multigraphs, $\Delta_{e}(G)\geq \Delta_{v}(G)$.
\begin{theorem}
\label{thm:max_degree}
For a multigraph $G$ with $|V(G)|>4$:
\begin{itemize}
    \item If $\Delta_{v}(G) \leq 3$, then $\mu(G) \leq 2 \leq |V(G)|-3$.
    \item If $\Delta_{v}(G) \geq 4$, then $\mu(G)\leq \Delta_{v}(G)-2 \leq |V(G)|-3$.
\end{itemize}
\end{theorem}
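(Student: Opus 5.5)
The plan is to split according to $\Delta_v(G)$ and, in both cases, to count, at a suitable vertex, how the monochromatic colours can be distributed among its neighbours.

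\textbf{Trivial parts and normalization.} The inequalities on the right are immediate. If $G$ has no perfect matching then $\mu(G)=0$. Otherwise $|V(G)|$ is even and $>4$, so $|V(G)|\ge 6$, which gives $2\le |V(G)|-3$. Also $\Delta_v(G)\le |V(G)|-1$ gives $\Delta_v(G)-2\le |V(G)|-3$. Since $\mu(G)=\mu(\operatorname{mcg}(G))$ and deleting redundant edges cannot increase $\Delta_v$, we may assume that $G$ is matching covered. We may also assume $G$ is connected: a GHZ graph that is a disjoint union of two parts has dimension at most the minimum of the parts' dimensions, because a monochromatic colouring must be monochromatic of the same colour on every component. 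Finally, fix $(w,c)\in WC(G)$ with $\mu(G,c,w)=\mu\ge 3$; we aim for a contradiction in the first case and for $\mu\le \Delta_v-2$ in the second.

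\textbf{Case $\Delta_v(G)\le 3$.} Then $\kappa(G)\le\delta_v(G)\le 3$. If $\kappa(G)\le 2$, the result of~\cite{CG3} (the conjecture holds for vertex connectivity at most $2$) gives $\mu\le 2$. Otherwise the skeleton of $G$ is $3$-connected with maximum degree $3$, hence cubic, and I would invoke the cubic-graph case of~\cite{CG3}. I need to check that their reduction framework covers multigraphs whose skeleton is cubic, not only simple cubic graphs. If it does not, I would rerun their argument with a case analysis on the parallel classes at a vertex.

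\textbf{Case $\Delta_v(G)\ge 4$.} Each of the $\mu$ monochromatic colours $i$ is realised by a perfect matching $M_i$ in which every vertex is incident with an edge of colour $i$. So at any vertex $v$, all $\mu$ colours occur on edges from $v$ to its at most $\Delta_v$ neighbours. Suppose $\mu\ge\Delta_v-1$. Then, at every vertex, the colours are spread over few neighbours. Either some pair $u,v$ carries two distinct monochromatic colours on parallel edges, or $v$ has exactly $\Delta_v$ neighbours, almost all colours are in bijection with the neighbours, and there is at most one spare.

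\textbf{Main obstacle.} The hardest step is turning this local pigeonhole into a global contradiction. I would first try the following. Pick two colours $i,j$ that both occur on edges $uv$, and consider the perfect matchings of the weighted graph $G-\{u,v\}$. The monochromatic-$i$ and monochromatic-$j$ colourings of $G$ restrict to monochromatic colourings of $G-\{u,v\}$. A mixed colouring, with $uv$ of colour $i$ and the rest of colour $j$, must have weight $0$; this forces the $j$-monochromatic weight of $G-\{u,v\}$ to vanish unless it is compensated by matchings avoiding $uv$, and such matchings pass through the other neighbours of $u$ and $v$. Iterating this over the at most $\Delta_v-\mu\le 1$ spare slots should show that some vertex sees fewer than $\mu$ colours, contradicting $\mu\ge\Delta_v-1$. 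If this direct cancellation argument becomes unwieldy, the fallback is to use the reduction operations of~\cite{CG3} to contract such a pair and induct on $|V(G)|$, keeping $\Delta_v$ non-increasing.
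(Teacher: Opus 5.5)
Your normalisation is fine, but the case $\Delta_v(G)\ge 4$ is not proved. Counting colours at a single vertex, even together with the mixed-colouring trick, only gives $\mu\le$ (number of neighbours of that vertex). A vertex with $\Delta_v$ neighbours, with each of $\Delta_v-1$ colours supplied on a different neighbour and one neighbour spare, is locally consistent. So ``iterating over the spare slots'' has nothing to work with.

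Your mixed-colouring computation also needs correcting. In the colouring with $u,v$ coloured $i$ and the rest coloured $j$, $u$ and $v$ must be matched to each other. Its weight is therefore exactly $w(\mathbf{i}_{\{u,v\}})\,w(\mathbf{j}_{V(G)\setminus\{u,v\}})$, and nothing can compensate. After merging same-coloured parallel edges, this says that when two differently coloured edges join $u$ and $v$, neither is \emph{contributing}. Hence each feasible colour needs a contributing edge at $u$ to a neighbour carrying no other colour.

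The missing idea is a dichotomy on degrees inside single colour classes, applied after deleting redundant colour classes.
\begin{itemize}
\item Suppose every colour class has maximum degree at most $2$. If there is no non-monochromatic perfect matching, Bogdanov's bound gives $\mu\le 2$. Otherwise some non-monochromatic colouring has weight $0$. Factorising over colours and over components of a colour class gives a zero-weight component of, say, colour $1$. It is a path or cycle carrying a perfect matching. A path has a unique matching of nonzero weight, so the component is an even cycle. Since degrees are at most $2$, this cycle is a whole component of colour class $1$ in $G$, forcing $w(\mathbf{1}_{V(G)})=0$. That contradicts colour $1$ being non-redundant.
\item Otherwise some colour $1$ has a vertex $u$ with three distinct colour-$1$ neighbours. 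Each of the other $\mu-1$ feasible colours needs a contributing edge at $u$. These edges go to pairwise distinct neighbours not joined to $u$ by colour $1$. So $u$ has at least $\mu+2$ neighbours, i.e.\ $\mu\le\Delta_v-2$.
\end{itemize}

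For $\Delta_v(G)\le 3$ you rely on the vertex-connectivity and cubic-graph results of Chandran, Gajjala and Illickan. As you suspect, a cubic \emph{skeleton} with arbitrarily many parallel edges is not evidently covered by a result about cubic graphs. Your fallback, ``a case analysis on the parallel classes'', is not specified, so this case is also open in your proposal. This route would also make the theorem depend on a computer-assisted four-vertex result. The same dichotomy closes the case elementarily. If all colour classes have degree at most $2$, then $\mu\le 2$. Otherwise the three colour-$1$ neighbours of $u$ are all of its neighbours, leaving no neighbour for the contributing edges of the remaining $\ge 2$ colours.
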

{In physical terms, Theorem~\ref{thm:max_degree} rules out the creation, within this graph model, of GHZ states of dimension at least $n-2$ using $n>4$ parties. A weaker bound, $\mu(G)<|V(G)|-1$, also follows from a result of~\cite{Kostochka2012LargeRM}. We prove Theorem~\ref{thm:max_degree} in \Cref{sec:max_degree}.} We also prove bounds in terms of edge connectivity.

{We next restrict attention to simple graphs, in which at most one edge joins any pair of vertices. For these graphs, we prove upper bounds in terms of vertex connectivity, and we obtain new combinatorial proofs for the smallest non-trivial cases.}

\begin{table}[t]
    \centering
    \caption{Known bounds, our results, and conjectures. The columns DI, ME, and BC indicate whether destructive interference, multi-edges, and bichromatic (half-edge) edges are allowed. Here $n=|V(G)|$ and $d$ is the dimension.}
    \label{tab:comparison1}
    \begin{tabularx}{\textwidth}{@{}ccc>{\raggedright\arraybackslash}p{0.38\textwidth}>{\raggedright\arraybackslash}X@{}}
        \hline
        \textbf{DI} & \textbf{ME} & \textbf{BC} & \textbf{Dimension bound} & \textbf{Status} \\
        \hline
        \xmark & \cmark & \cmark & $n\geq 4$, $d\leq 3$ & Proved in~\cite{Quantum_graphs}; characterized in~\cite{CG1}\\
        \cmark & \cmark & \cmark & $n=4$, $d\leq 3$ & Proved in~\cite{Kevin}\\
        \cmark & \cmark & \cmark & $n>4$, $d\leq 2$ & Proved for sparse graphs~\cite{CG3}\\
        \cmark & \xmark & \cmark & $n>4$, $d<n/\sqrt{2}$ & Proved in~\cite{CG2}\\
        \hline
        \cmark & \cmark & \xmark & $n>4$, $d\leq n-3$ & \Cref{thm:max_degree}\\
        \cmark & \cmark & \xmark & Connected $G$, $d\leq\lambda(G)$ & \Cref{thm:edge_connectivity}\\
        \cmark & \xmark & \xmark & Connected $G$, $n\geq 4$;\newline $d\leq 2^{\kappa(G)-1}+\binom{\kappa(G)}{2}$ & \Cref{thm:vertex_connectivity}\\
        \cmark & \cmark & \xmark & $n=6$, $d<3$; $n=8$, $d<4$ & Computer-assisted proof in~\cite{CerveraLierta2022designofquantum}\\
        \cmark & \xmark & \xmark & $n=6$, $d<3$; $n=8$, $d<4$ & \Cref{thm:six_eight_vertices}; combinatorial proof\\
        \hline
        \cmark & \cmark & \xmark & $n>4$, $d<n/2$ & Conjecture~\cite{CerveraLierta2022designofquantum}\\
        \cmark & \cmark & \cmark & $n>4$, $d\leq 2$ & Krenn--Gu conjecture~\cite{krenn_website}\\
        \hline
    \end{tabularx}
\end{table}

{For simple connected graphs, since one edge of each of the $\mu(G)$ feasible monochromatic colours must be incident on each vertex, $\mu(G)$ is at most the minimum degree.}
\begin{observation}
\label{obs:hat_min_degree}
{For a simple connected graph $G$, $\mu(G) \leq \delta(G)$.}
\end{observation}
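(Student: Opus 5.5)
The plan is to turn each feasible monochromatic colouring into a perfect matching that uses only edges of that colour, and then count edges at a single vertex. Suppose $G^w_c$ is a GHZ graph with $\mu(G,c,w)=d$, and let its feasible monochromatic inherited vertex colourings be the constant colourings with distinct colours $a_1,\dots,a_d$. For each $i$, feasibility gives a perfect matching $M_i\in\mathcal{M}_{vc}$ where $vc\equiv a_i$. Under $vc_{M_i}$ every vertex gets colour $a_i$, so every edge of $M_i$ has colour $a_i$.

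Now fix an arbitrary vertex $v$. For each $i$, let $e_i$ be the unique edge of $M_i$ incident with $v$; it has colour $c(e_i)=a_i$. Since the colours $a_1,\dots,a_d$ are pairwise distinct and each edge carries a single colour, the edges $e_1,\dots,e_d$ are pairwise distinct. Because $G$ is simple, distinct edges at $v$ have distinct other endpoints, so $v$ has at least $d$ neighbours. Hence $d\le \deg(v)$ for every $v$, and so $d\le\delta(G)$. Maximising over $(w,c)\in WC(G)$ gives $\mu(G)\le\delta(G)$. If $WC(G)$ is empty, then $\mu(G)=0\le\delta(G)$ trivially.

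The only step needing care is the claim that the $e_i$ are distinct. That relies on the ordinary single-colour edge model, because an edge coloured $a_i$ cannot also be coloured $a_j$ for $j\neq i$. The argument does not use the weights, the cancellation condition, or connectivity; connectivity is only part of the standing hypotheses.
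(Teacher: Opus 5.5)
Your proof is correct and is essentially the paper's own justification: each of the $\mu(G)$ feasible monochromatic colours comes from a perfect matching of that colour, so every vertex has an incident edge of each such colour, and simplicity turns these distinct edges into distinct neighbours. You have only made explicit what the paper leaves implicit, namely that the edges $e_i$ are distinct because in the single-colour edge model an edge carries exactly one colour.
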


{Observation~\ref{obs:hat_min_degree} gives the first degree-based upper bound for simple graphs. The parameters $\lambda(G)$ and $\kappa(G)$ are often more informative than $\delta(G)$, and Whitney's inequalities~\cite{whitney} give
\[
    \kappa(G) \leq \lambda(G) \leq \delta(G).
\]
It is therefore natural to ask whether $\mu(G)$ can be bounded directly in terms of edge or vertex connectivity. Our first connectivity bound answers this for edge connectivity; in fact, it holds for loopless multigraphs.}

\begin{theorem}
\label{thm:edge_connectivity}
{For every connected multigraph $G$, $\mu(G) \leq \lambda(G)$.}
\end{theorem}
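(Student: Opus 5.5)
Let $G^w_c$ be a GHZ graph of dimension $d=\mu(G)$ with feasible monochromatic colours $1,\dots,d$, and fix a minimum edge cut $F=E(S,\bar S)$ with $|F|=\lambda(G)$. The plan is to build, for each colour $i$, a perfect matching of colour $i$ whose edges in $F$ are all of colour $i$, and to show that these matchings force at least $d$ distinct edges in $F$. The argument uses parity, and the case of odd $|S|$ turns out to be the easy one.

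\textbf{Case $|S|$ odd.} Every perfect matching uses an odd, hence positive, number of edges of $F$. A monochromatic perfect matching of colour $i$ therefore contains an edge of $F$ of colour $i$. Since colours $1,\dots,d$ are all feasible, $F$ contains edges of $d$ distinct colours, so $d\le |F|=\lambda(G)$.

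\textbf{Case $|S|$ even.} This is where the main difficulty lies. A perfect matching may avoid $F$ entirely, and we must rule out all colours being realisable inside $G[S]$ and $G[\bar S]$ separately. I would use weight factorisation. Take any inherited colouring $vc$ that is monochromatic of colour $i$ on $S$ and of colour $j$ on $\bar S$. Split the perfect matchings inducing $vc$ by the set $X\subseteq F$ they use. For $X=\emptyset$, the total weight is $a_i b_j$, where
\[
a_i=\sum_{M \text{ p.m. of } G[S],\ \text{colour } i} w(M),\qquad b_j=\sum_{M \text{ p.m. of } G[\bar S],\ \text{colour } j} w(M).
\]
If no edge of $F$ has colour $i$ or $j$, then only $X=\emptyset$ contributes, so $w(vc)=a_ib_j$. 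Consider two colours $i\ne j$, neither of which appears on $F$. Then $a_ib_i=1$ and $a_jb_j=1$, so all four of $a_i,b_i,a_j,b_j$ are nonzero. But $w(vc)=a_ib_j$ must vanish because $vc$ is non-monochromatic, a contradiction. Hence at most one colour among $1,\dots,d$ is absent from $F$, giving $d\le \lambda(G)+1$.

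\textbf{Sharpening to $d\le\lambda(G)$.} Suppose exactly one colour $i_0$ is absent from $F$ and every other colour is present. I would try to pick the cut more cleverly. Take $F$ minimal, so $G[S]$ and $G[\bar S]$ are connected, or replace $S$ by $S\setminus\{v\}$ to move to the odd case, where the parity argument applies. Alternatively, I would use that each colour $j\neq i_0$ present on $F$ must still satisfy the cancellations. For instance, with the mixed colouring ($i_0$ on $S$, $j$ on $\bar S$), the total weight is $a_{i_0}b_j$ plus contributions from $X\ne\emptyset$. Any nonempty $X$ would force edges of colour $i_0$ in $F$, so those contributions vanish. This gives $a_{i_0}b_j=0$, hence $b_j=0$, and symmetrically $a_j=0$. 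Then every colour-$j$ perfect matching with $j\ne i_0$ must cross $F$, using an even, hence at least $2$, number of colour-$j$ edges of $F$. Counting then gives $|F|\ge 2(d-1)$, so $d\le \lambda/2+1\le\lambda$ whenever $\lambda\ge2$.

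The case $\lambda=1$ with $|S|$ even still has to be handled directly. There the cut is a bridge lying in no perfect matching, so it is redundant. Since $\mu(G)=\mu(\mathrm{mcg}(G))$ and $\mathrm{mcg}(G)$ is disconnected, $\mathrm{mcg}(G)$ is a disjoint union of components. The argument above with no colours on $F$ then shows $d\le1$. I expect this even-cut bookkeeping to be the main obstacle. In particular, I need to check that the $X\ne\emptyset$ contributions genuinely vanish, since $X$ must be monochromatic in $vc$.
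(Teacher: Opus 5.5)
Your proof is correct and is essentially the paper's argument: a parity split on a minimum cut, the mixed colouring ($i$ on $S$, $j$ on $\bar S$) with weight $a_ib_j$, and at least two cut edges for each colour that has a crossing monochromatic matching. Only tidy-ups remain: since every edge carries a single colour, a mixed colouring with $i\neq j$ can never use a cut edge, so $w(vc)=a_ib_j$ holds unconditionally and you can classify colours directly by whether some monochromatic matching crosses $F$, as the paper does; also, $a_j=b_j=0$ shows only that \emph{some} colour-$j$ matching crosses $F$, which is all that $|F|\ge 2(d-1)$ needs and already gives $d\le 1$ when $\lambda(G)=1$, so the separate bridge case and the tentative cut-modification ideas can be dropped.
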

{We present the proof of \cref{thm:edge_connectivity} in \cref{sec:edge_connectivity}.}

We also give an upper bound for $\mu(G)$ in terms of vertex connectivity, $\kappa(G)$, for simple graphs.
\begin{theorem}
\label{thm:vertex_connectivity}
{For every simple connected graph $G$ with $|V(G)|\geq 4$, 
\[
    \mu(G) \leq 2^{\kappa(G)-1}+ \binom{\kappa(G)}{2}.
\]}
\end{theorem}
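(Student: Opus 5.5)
Let $k=\kappa(G)$ and fix a minimum vertex cut $T$ with $|T|=k$, so that $G-T$ has components. Group them into two nonempty sides $A$ and $B$, so $V(G)=A\cup T\cup B$ with no edges between $A$ and $B$. (If $G$ is complete, $k=n-1$; then the bound $\mu(G)\le\delta(G)=n-1\le 2^{n-2}+\binom{n-1}{2}$ from \Cref{obs:hat_min_degree} suffices for $n\ge 4$, so assume $G$ is not complete.) Fix a GHZ colouring and weighting attaining $\mu(G)$, and let the feasible monochromatic colours be $1,\dots,\mu$. For each colour $i$, pick a monochromatic perfect matching $M_i$ of colour $i$. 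Every vertex of $T$ is matched by $M_i$ either into $A$, into $B$, or to another vertex of $T$. Record this as a \emph{pattern} $\pi_i$: a partition of $T$ into the set $T_A$ matched to $A$, the set $T_B$ matched to $B$, and a perfect matching on the remaining vertices of $T$. The parity of $|A|$ determines the parity of $|T_A|$. The plan is to show that the number of colours is at most the number of ``admissible'' pattern types, which should be about $2^{k-1}$ (subsets $T_A$ up to the parity constraint) plus $\binom{k}{2}$ (colours that use an internal edge of $T$).

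\textbf{Key step: a swapping/cancellation argument.} Suppose two colours $i\neq j$ have matchings $M_i,M_j$ with the same set $T_A$ (equivalently, they agree on which vertices of $T$ go to the $A$-side). Then form $M'=(M_i\cap E(A\cup T_A\text{-side}))\cup(M_j\cap \text{rest})$: take the part of $M_i$ covering $A\cup T_A$ and the part of $M_j$ covering $B\cup (T\setminus T_A)$. This is a perfect matching of $G$, since nothing joins $A$ and $B$. It is non-monochromatic because it uses colour $i$ on the $A$ side and colour $j$ on the $B$ side. The aim is to show that its inherited colouring $vc$ cannot be cancelled. To do this, consider all perfect matchings inducing $vc$: every vertex of $A\cup T_A$ is coloured $i$ and every other vertex is coloured $j$. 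Since the graph is simple, each such matching splits as an $i$-coloured perfect matching of some set containing $A$ and a $j$-coloured one on the rest. Its weight then factors, after summing over splits with the same $T_A$, as a product (weight of colour-$i$ matchings covering $A\cup S$) $\times$ (weight of colour-$j$ matchings covering $B\cup (T\setminus S)$). Here $S$ ranges over subsets of $T$ whose vertices are coloured $i$ in $vc$, which forces $S= T_A$ except for $i$-edges inside $T$ or edges from $T$ to the wrong side. Summing over matchings of colour $i$ alone, the monochromatic condition (weight $1$) together with vanishing of mixed colourings should force the factor products to be nonzero for at least one pair. That would give a contradiction with $w(vc)=0$. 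Making this factorisation clean is where simplicity is used: there is at most one edge per pair, so a colour-$i$ edge between $u,v$ cannot coexist with a colour-$j$ edge on the same pair.

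\textbf{Counting.} Colours whose matching uses no edge inside $T$ give distinct subsets $T_A$, and by complementation/parity at most $2^{k-1}$ of them survive. Colours that must use an edge inside $T$ are charged to such an edge. Since $G$ is simple and distinct colours require distinct edges, they contribute at most $\binom{k}{2}$. This gives $\mu\le 2^{k-1}+\binom{k}{2}$. The main obstacle I expect is the cancellation step. The inherited colouring $vc$ may also be induced by matchings whose restriction to $T$ is a different pattern, and their contributions might cancel the product term. I would first try to handle this by choosing, among all colour pairs with equal $T_A$, a pattern that is extremal (for instance $|T_A|$ minimal), so that the competing patterns are all monochromatic-inconsistent. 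Failing that, I would reduce via the graph-reduction framework of \cite{CG3}: contract $A$ to a small gadget on $T$ and argue on the resulting graph of size $O(k)$.
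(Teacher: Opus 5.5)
\textbf{Verdict: there is a genuine gap, at exactly the step you flag as the main obstacle.}

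\textbf{Your worry is misplaced.} Fix the colouring $vc$ that gives $A\cup T_A$ colour $i$ and $B\cup(T\setminus T_A)$ colour $j$. Every perfect matching inducing $vc$ covers $A\cup T_A$ with colour-$i$ edges and the rest with colour-$j$ edges. So \Cref{colour_split_lemma} gives exactly
$w(vc)=w(\mathbf{i}_{A\cup T_A})\,w(\mathbf{j}_{B\cup(T\setminus T_A)})$,
and no other split of $T$ can contribute to $vc$.

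\textbf{The real problem.} This product need not be non-zero. A single chosen matching $M_i$ with pattern $T_A$ only shows that colour-$i$ perfect matchings of $G[A\cup T_A]$ exist. Their weights may cancel, giving $w(\mathbf{i}_{A\cup T_A})=0$, and the same holds on the $j$ side. So pigeonholing on the patterns of arbitrarily chosen matchings yields no contradiction. Choosing $|T_A|$ extremal does not help, and neither does a graph reduction.

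\textbf{The missing idea: choose each colour's pattern by weight, not by a single matching.} Your remark that the weight-$1$ condition should force some product to be non-zero is the right instinct. It has to be used to select the pattern \emph{before} pigeonholing.
\begin{itemize}
\item Restrict to colours with no edge at all inside $G[T]$. This excludes at most $|E(T)|\le\binom{k}{2}$ colours, since each edge carries one colour. This bound, not the factorisation, is where simplicity is used.
\item For a remaining colour $i$, every colour-$i$ perfect matching of $G$ splits uniquely into perfect matchings of $G[A\cup J]$ and $G[B\cup(T\setminus J)]$, with $|A\cup J|$ even. Hence
$1=w(\mathbf{i}_{V(G)})=\sum_{J} w(\mathbf{i}_{A\cup J})\,w(\mathbf{i}_{B\cup(T\setminus J)})$,
summed over the $2^{k-1}$ parity-admissible $J$.
\item So some $J_i$ has both factors non-zero.
\item With more than $2^{k-1}$ such colours, two colours $i\neq j$ share $J_i=J_j=J$.
\item Then $w(\mathbf{i}_{A\cup J})\,w(\mathbf{j}_{B\cup(T\setminus J)})\neq 0$ is the weight of a non-monochromatic colouring, which is a contradiction.
\end{itemize}

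\textbf{Your classification step also needs changing.} Classifying colours by whether the chosen $M_i$ uses an internal $T$-edge is not enough. The decomposition above needs that \emph{no} colour-$i$ perfect matching uses such an edge. Otherwise a matching using an internal edge is counted under several $J$, and the identity fails. This is why the colours must be classified by colour class rather than by one chosen matching.
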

We note that when $G$ is disconnected and has a perfect matching, $\mu(G)=1$. Also, when $G$ is disconnected and has no perfect matching, $\mu(G)=0$. We present the proof for \cref{thm:vertex_connectivity} in \cref{sec:vertex_connectivity}.

{Finally, we revisit the smallest non-trivial cases. Cervera-Lierta et al.~\cite{CerveraLierta2022designofquantum} used SAT solvers to show that graphs on six and eight vertices have maximum dimension at most $2$ and $3$, respectively, in a more general setting allowing multi-edges. We give a purely combinatorial proof for simple graphs.}
\begin{theorem}
\label{thm:six_eight_vertices}
{For a simple graph, if $|V(G)|= 6$ then $\mu(G) \leq 2$ and if $|V(G)|= 8$ then $\mu(G) \leq 3$.}
\end{theorem}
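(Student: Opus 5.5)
We reduce to a small number of connectivity cases and handle the rest with the general bounds already stated. Fix a simple graph $G$ with $|V(G)|\in\{6,8\}$ and a GHZ colouring--weighting $(w,c)$ achieving $\mu(G)$. We may replace $G$ by $\operatorname{mcg}(G)$ without changing $\mu$. If $G$ is disconnected, then $\mu(G)\le 1$ by the remark after \Cref{thm:vertex_connectivity}, so we may assume $G$ is connected. By \Cref{obs:hat_min_degree} and \Cref{thm:edge_connectivity}, $\mu(G)\le \kappa(G)$ is not available directly, but \Cref{thm:vertex_connectivity} gives the following values of the bound $2^{\kappa-1}+\binom{\kappa}{2}$:
\[
\kappa=1:\ 1,\qquad \kappa=2:\ 3,\qquad \kappa=3:\ 7.
\]
So the case $\kappa(G)=1$ is immediate. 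For $\kappa(G)\le 2$, the result of \cite{CG3} (the conjecture holds for vertex connectivity at most $2$) gives $\mu(G)\le 2$ directly, for both $n=6$ and $n=8$. It therefore remains to treat $\kappa(G)\ge 3$.

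\textbf{Six vertices.} Suppose $\mu(G)\ge 3$ with $\kappa(G)\ge 3$. By \Cref{thm:max_degree}, since $n=6>4$ we have $\mu(G)\le \max(2,\Delta(G)-2)$. In a simple graph on six vertices $\Delta(G)\le 5$, so $\mu(G)\le 3$, with equality forcing $\Delta(G)=5$ and $\mu(G)=3$. The plan is to rule out equality. Take a vertex $v$ of degree $5$. Each of the three monochromatic colours needs a perfect matching of that colour, and these three perfect matchings are pairwise edge-disjoint because each edge has one colour. Their union is therefore a $3$-regular spanning subgraph $H$ whose edges carry a proper $3$-edge-colouring. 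We then enumerate the cubic graphs on six vertices, namely $K_{3,3}$ and the prism. For each, together with the extra edges of $G$ (in particular the two further edges at $v$), we exhibit a non-monochromatic colouring class containing exactly one perfect matching. Such a class has nonzero weight, contradicting the GHZ property. Concretely, we pick a perfect matching using an edge at $v$ outside $H$ and two $H$-edges of distinct colours, and show by a short local argument that no other perfect matching induces the same vertex colouring.

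\textbf{Eight vertices.} Suppose $\mu(G)\ge 4$. Then \Cref{thm:max_degree} forces $\Delta(G)\ge 6$, and \Cref{obs:hat_min_degree} forces $\delta(G)\ge 4$. As before, the four monochromatic perfect matchings are edge-disjoint and form a $4$-regular spanning subgraph $H$ with a proper $4$-edge-colouring, so $G\supsetneq H$. The same strategy applies: choose a vertex $v$ of degree at least $6$ with an edge $e=vu$ not in $H$, and build a perfect matching containing $e$ and $H$-edges of mixed colours whose induced colouring is realised uniquely. Equivalently, we would show that every edge outside the monochromatic matchings must be redundant, contradicting $G=\operatorname{mcg}(G)$. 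A cleaner route is a counting and parity argument: for a non-monochromatic colouring in which only the two endpoints of $e$ receive the extra colour $c(e)$, the matchings realising it are $e$ together with perfect matchings of $G-u-v$ of prescribed colours. We then show these cannot cancel when $G-u-v$ has six vertices, reusing the six-vertex analysis.

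The main obstacle is the eight-vertex case. There the $4$-regular $4$-edge-coloured graphs on eight vertices are numerous, and a guaranteed uniquely realised non-monochromatic colouring is not obvious. I would first attempt the reduction to six vertices via $G-u-v$, and fall back on a structured case analysis over the colour of $e$ otherwise.
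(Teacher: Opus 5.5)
The eight-vertex case is not proved. You only propose to find a non-monochromatic colouring realised by a single perfect matching, and you concede that you cannot guarantee one. Nothing in your setup supplies it. The union $H$ of the four selected matchings may contain no rainbow perfect matching at all: take $H=K_{4,4}$ coloured by the addition table of $\mathbb{Z}_4$, which has no transversal. Moreover, once a colour class has four vertices, extra edges of $G$ can make its weight cancel, e.g.\ $w((1,2))w((3,4))+w((1,3))w((2,4))=0$.

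The missing idea is to turn these forced cancellations into structure. Suppose a colouring with a $4$-vertex class must have weight $0$, while one matching of that class is known to have non-zero weight. Then the alternative pair of edges must exist \emph{in the same colour}, and by simplicity those pairs are then unavailable to every other colour. The paper's proof is built on exactly this:
\begin{itemize}
\item If every two colours form two alternating $4$-cycles, the forced red chords confine blue to edges between the two cycles. This yields an alternating Hamiltonian cycle, contradicting the case assumption.
\item Given an alternating red--green Hamiltonian cycle, a crossing edge $(8,3)$ of another colour forces the green chords $(4,6),(5,7)$, and hence $(4,7)$ in that colour's matching.
\item A short case analysis of the remaining configurations then ends either in a non-monochromatic colouring of non-zero weight or in a pair forced to carry two colours.
\end{itemize}

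Your fallbacks do not close the gap. The six-vertex argument needs three colours, each with a non-zero-weight perfect matching on those six vertices, and $G-u-v$ does not inherit this. Showing that every edge outside the selected matchings is redundant is not equivalent to your goal either: an edge of colour $i$ outside the chosen colour-$i$ matching may lie in another colour-$i$ monochromatic perfect matching.

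The six-vertex case is only sketched, and your concrete recipe can fail. If $H=K_{3,3}$ and $e=uv$ joins two vertices on the same side, then $H-u-v\cong K_{1,3}$ has no perfect matching. It is easily repaired, and the repair is the paper's proof, which needs neither $\kappa\ge 3$, nor $\Delta=5$, nor any edge outside $H$:
\begin{itemize}
\item The colour-$1$ and colour-$2$ matchings form an alternating $6$-cycle $abcdef$; a $4$-cycle plus a $2$-cycle would need a double edge.
\item The colour-$3$ matching uses no cycle edge. Every non-cycle edge between $\{a,c,e\}$ and $\{b,d,f\}$ is antipodal, and both sets have odd size, so this matching contains an antipodal chord, say $ad$.
\item Then $\{ad,bc,ef\}$ has three distinct colours. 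In a simple graph its colouring is realised by this matching alone, so it has non-zero weight, a contradiction.
\end{itemize}

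Finally, appealing to the connectivity-$2$ result of Chandran, Gajjala and Illickan is logically admissible. However, it imports Mantey's computer-assisted four-vertex theorem, which defeats the purpose of a combinatorial proof. It is also unnecessary, since neither of your arguments uses $\kappa\ge 3$.
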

 We present the proof of \cref{thm:six_eight_vertices} in \cref{sec:six_vertices}. 

{Theorem~\ref{thm:edge_connectivity} is tight for $K_4$: its edge connectivity is $3$, and its matching index is also $3$. Theorem~\ref{thm:vertex_connectivity} is tight for $\kappa(G)=1$, for instance for paths with a perfect matching. For $\kappa(G)=2$, stronger results are already known~\cite{CG3}; for higher connectivity, the optimal dependence on $\kappa(G)$ remains open, and in particular it is not known whether a polynomial bound in $\kappa(G)$ should hold. Indeed, any example on more than four vertices with matching index greater than $2$ would be a breakthrough, both as a counterexample to the Krenn--Gu conjecture and as a new route to high-dimensional GHZ-state generation.}

\section{Preliminaries}
\label{sec:prelim}
Let $G_c^w$ be a GHZ graph. We call a colour class $i$ of the colouring $c$ \emph{redundant} if it does not contain any monochromatic perfect matching of $G_c^w$. Suppose we remove all the edges of such a redundant colour class $i$ to form a new graph ${G'}_c^w$. Observe that any feasible vertex colouring of the new graph ${G'}_c^w$ has the same weight in ${G}_c^w$ and ${G'}_c^w$. Therefore, an edge-coloured edge-weighted graph remains a GHZ graph, even after removing all its edges from a redundant colour class. Moreover, the number of monochromatic vertex colourings of weight $1$ remains the same. Therefore, such a removal does not affect the dimension of the edge-coloured edge-weighted graph. The reader may note that a redundant colour class may have non-redundant edges.

{For $S\subseteq V(G)$, let $G[S]$ denote the subgraph induced by $S$. We write $\mathbb{N}$ and $\mathbb{C}$ for the natural and complex numbers, respectively. The cardinality of a set $\mathcal{S}$ is denoted by $|\mathcal{S}|$. For a positive integer $r$, let $[r]=\{1,2,\ldots,r\}$.}

For $S\subseteq V(G)$, we define $\mathbf{i}_S$ to be the monochromatic inherited vertex colouring using colour $i$ over the induced subgraph $G_c^w[S]$. Its weight is denoted by $w(\mathbf{i}_S)$. We emphasize that this weight is with respect to the induced subgraph $G_c^w[S]$. {We now prove factorisation observations that are used throughout the proofs. These factorisations hold for graphs of any order.}

\begin{observation}
\label{colour_split_lemma}
{Let $vc$ be an inherited vertex colouring using the colours $[r]$, and let
$V_i=\{v\in V(G): vc(v)=i\}$
for each $i\in [r]$. Then
\[
    w(vc)=\prod_{i=1}^{r}w(\mathbf{i}_{V_i}).
\]}
\end{observation}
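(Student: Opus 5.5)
The proof is a direct computation that starts from the definition of $w(vc)$ as a sum over $\mathcal{M}_{vc}$. The plan is to set up a bijection between $\mathcal{M}_{vc}$ and the Cartesian product $\mathcal{M}_{\mathbf{1}_{V_1}}\times\cdots\times\mathcal{M}_{\mathbf{r}_{V_r}}$. Here $\mathcal{M}_{\mathbf{i}_{V_i}}$ is the set of perfect matchings of the induced subgraph $G_c^w[V_i]$ all of whose edges have colour $i$. Once the bijection is in place, the weight, being a product over edges, factorises, and so does the sum.

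First, take $M\in\mathcal{M}_{vc}$. Every edge $e=uv\in M$ has a single colour $c(e)$, and by definition of the inherited colouring $vc(u)=vc(v)=c(e)$. Hence both endpoints of an edge of colour $i$ lie in $V_i$, and $e$ is an edge of $G[V_i]$. The set $M_i=\{e\in M: c(e)=i\}$ therefore covers each vertex of $V_i$ exactly once and uses only vertices of $V_i$. So $M_i$ is a monochromatic perfect matching of colour $i$ in $G_c^w[V_i]$, that is, $M_i\in\mathcal{M}_{\mathbf{i}_{V_i}}$, and $M$ is the disjoint union of $M_1,\dots,M_r$. This is the step where the single-colour-per-edge model is essential: with bichromatic edges an edge could join different classes $V_i$, and the decomposition would fail.

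Conversely, suppose we are given $M_i\in\mathcal{M}_{\mathbf{i}_{V_i}}$ for each $i$. The sets $V_i$ partition $V(G)$, so $M=\bigcup_i M_i$ is a perfect matching of $G$. Since each edge of $M_i$ has colour $i$, $M$ induces exactly $vc$. The two maps are mutually inverse, which gives the bijection. Then $w(M)=\prod_i w(M_i)$ because $w$ is multiplicative over disjoint edge sets. Summing over the product set gives
\[
w(vc)=\sum_{(M_1,\dots,M_r)}\prod_{i=1}^r w(M_i)=\prod_{i=1}^r\sum_{M_i\in\mathcal{M}_{\mathbf{i}_{V_i}}}w(M_i)=\prod_{i=1}^r w(\mathbf{i}_{V_i}).
\]

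The only point needing care is the degenerate case in which some $\mathcal{M}_{\mathbf{i}_{V_i}}$ is empty. Then $\mathcal{M}_{vc}$ is also empty, and both sides equal $0$ under the convention that infeasible colourings have weight $0$. If some $V_i$ is empty, which happens when a colour in $[r]$ is unused, the empty matching gives weight $1$, consistent with the empty product. I expect no real obstacle here; the main thing to state explicitly is that edges cannot cross between colour classes.
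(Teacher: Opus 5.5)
Your proposal is correct and takes the same approach as the paper. The paper likewise identifies $\mathcal{M}_{vc}$ with the product of the sets of colour-$i$ perfect matchings of the induced subgraphs $G_c^w[V_i]$, then factorises the sum of products of edge weights. Your version also spells out two points the paper leaves implicit: an edge carrying a single colour cannot join two different classes $V_i$, and the cases where some matching set or some $V_i$ is empty.
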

\begin{proof}
{A perfect matching inducing $vc$ is exactly the disjoint union, over the colours $i$ used by $vc$, of a monochromatic perfect matching of colour $i$ in $G_c^w[V_i]$. Conversely, any such collection gives a perfect matching inducing $vc$. Summing products of edge weights over these independent choices gives the claimed product.}
\end{proof}

Let $G_c^w[\mathbf{i},S]$ denote the subgraph of $G_c^w[S]$ formed by all edges of $G_c^w[S]$ from colour class $i$.
\begin{observation}
\label{partition_split_lemma}
{For $S \subseteq V(G)$, let the connected components of $G_c^w[\mathbf{i},S]$ have vertex sets $S_1,\ldots,S_r$. Then
\[
    w(\mathbf{i}_S)=\prod_{j=1}^{r}w(\mathbf{i}_{S_j}).
\]}
\end{observation}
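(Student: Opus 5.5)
The plan is to prove the statement by a bijection between perfect matchings, exactly as in the proof of \Cref{colour_split_lemma}. Here $w(\mathbf{i}_S)$ is the sum of $w(M)$ over all perfect matchings $M$ of $G_c^w[S]$ whose edges all have colour $i$. Equivalently, it is the sum over perfect matchings of the graph $G_c^w[\mathbf{i},S]$, where we keep all vertices of $S$ and only the colour-$i$ edges. If no such matching exists, the weight is $0$ by convention.

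First I will observe that every edge of $G_c^w[\mathbf{i},S]$ lies inside exactly one component $S_j$. Hence a perfect matching $M$ of $G_c^w[\mathbf{i},S]$ splits uniquely as a disjoint union $M=M_1\cup\dots\cup M_r$, with $M_j=M\cap E(G_c^w[\mathbf{i},S_j])$. Each $M_j$ is a perfect matching of colour $i$ on $S_j$: every vertex of $S_j$ is covered by an edge of $M$, and that edge must stay inside $S_j$. Moreover, the colour-$i$ edges of $G_c^w[S_j]$ are exactly the edges of the component on $S_j$. This is because $G_c^w[S_j]$ is an induced subgraph of $G_c^w[S]$, so any colour-$i$ edge of $G_c^w[S]$ with both ends in $S_j$ already belongs to that component. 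Thus $M_j$ contributes to $w(\mathbf{i}_{S_j})$.

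Conversely, any choice of colour-$i$ perfect matchings $M_j$ of $G_c^w[S_j]$, one for each $j$, has a disjoint union that is a colour-$i$ perfect matching of $G_c^w[S]$. This gives a bijection between the relevant matchings of $G_c^w[S]$ and the tuples $(M_1,\dots,M_r)$. Since $w(M)=\prod_j w(M_j)$ by the definition of the weight of a perfect matching, summing over the bijection and expanding the product of sums gives
\[
w(\mathbf{i}_S)=\sum_{(M_1,\dots,M_r)}\prod_{j=1}^r w(M_j)=\prod_{j=1}^r w(\mathbf{i}_{S_j}).
\]

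The only step needing care is the degenerate case, where some component, for instance an isolated vertex or an odd component, has no colour-$i$ perfect matching. In that case the corresponding factor is $0$ by convention. Likewise $G_c^w[S]$ has no colour-$i$ perfect matching, since such a matching would restrict to one on that component, so the left side is also $0$. With this case handled, the identity holds in general, and there is no real obstacle beyond this bookkeeping.
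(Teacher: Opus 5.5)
Your proof is correct and takes the same approach as the paper. A colour-$i$ perfect matching of $G_c^w[S]$ splits uniquely into colour-$i$ perfect matchings of the components, so the weight sum factorises by distributivity. You also fill in details the paper leaves implicit: you read $G_c^w[\mathbf{i},S]$ as a spanning subgraph, so a vertex with no colour-$i$ edge is a singleton component contributing a factor $0$, and you check that the colour-$i$ edges of the induced subgraph $G_c^w[S_j]$ are exactly the edges of that component.
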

\begin{proof}
{A colour-$i$ perfect matching of $G_c^w[S]$ is the disjoint union of colour-$i$ perfect matchings of the connected components of $G_c^w[\mathbf{i},S]$. The weight sum therefore factorises over these components.}
\end{proof}

Consider an edge $e=(v_1,v_2)$ of colour $c(e)$. A monochromatic inherited vertex colouring of colour $c(e)$ can be inherited from perfect matchings of two types. 
\begin{enumerate}
    \item The monochromatic perfect matchings of colour $c(e)$ containing $e$.
    \item The monochromatic perfect matchings of colour $c(e)$ not containing $e$. 
\end{enumerate}
{If the sum of weights of perfect matchings of type $1$ is non-zero, we call the edge $e$ a \emph{contributing} edge. If it is $0$, we call it a non-contributing edge.}

For each pair of vertices and each colour, replace all parallel edges of that colour by a single edge whose weight is the sum of their weights. If this sum is zero, delete the entire bundle instead. This operation preserves the weight of every inherited vertex colouring: the contribution of the parallel edges factors as the sum of their weights. In particular, the weights of all feasible monochromatic colourings remain $1$, so the graph remains a GHZ graph of the same dimension. The resulting subgraph has no more than one edge of each colour between any two vertices, and neither its neighbour-degrees nor its edge-degrees have increased. We use this reduction for the multigraph bounds below. The elementary bound $\mu(G)\leq\delta_e(G)$ follows because each feasible monochromatic colour has an incident edge at every vertex. We strengthen this bound using \cref{obs:contri_edges}.
\begin{observation}
\label{obs:contri_edges}
After the parallel-edge reduction, in a GHZ graph $G_c^w$ with $|V(G)|\geq 4$, if there are two edges $e$ and $e'$ of different colours between $u,v$, then both must be non-contributing.  
\end{observation}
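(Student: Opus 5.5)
The plan is a direct contradiction argument using \cref{colour_split_lemma}. Suppose, after the parallel-edge reduction, that $e=(u,v)$ has colour $i$, $e'=(u,v)$ has colour $j\neq i$, and $e$ is contributing; the case where $e'$ is contributing is symmetric. I will exhibit a non-monochromatic inherited vertex colouring of non-zero weight, which violates the GHZ condition.

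\textbf{Step 1: rewrite the contributing condition.} The monochromatic colour-$i$ perfect matchings containing $e$ are exactly the sets $\{e\}\cup M'$, where $M'$ is a colour-$i$ perfect matching of $G_c^w[V(G)\setminus\{u,v\}]$. After the reduction, $e$ is the only colour-$i$ edge between $u$ and $v$, so no other edge plays the role of $e$. Hence the total weight of these matchings is $w(e)\cdot w(\mathbf{i}_{V(G)\setminus\{u,v\}})$. Since $e$ is contributing and $w(e)\neq 0$, we get $w(\mathbf{i}_{V(G)\setminus\{u,v\}})\neq 0$.

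\textbf{Step 2: build the bad colouring.} Let $vc$ colour $u$ and $v$ with $j$ and every other vertex with $i$. Because $|V(G)|\ge 4$, the set $V(G)\setminus\{u,v\}$ is non-empty, so $vc$ is non-monochromatic. By \cref{colour_split_lemma},
\[
w(vc)=w(\mathbf{j}_{\{u,v\}})\cdot w(\mathbf{i}_{V(G)\setminus\{u,v\}}).
\]
After the reduction, $e'$ is the unique colour-$j$ edge between $u$ and $v$, so $w(\mathbf{j}_{\{u,v\}})=w(e')\neq 0$. Together with Step 1, this gives $w(vc)\neq 0$. In particular $vc$ is feasible, since infeasible colourings have weight $0$.

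\textbf{Step 3: conclude.} A non-monochromatic inherited vertex colouring with non-zero weight contradicts condition 2 of the definition of a GHZ graph. Hence $e$ is non-contributing, and by symmetry so is $e'$.

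The only delicate point is the bookkeeping in Steps 1 and 2, which relies on the reduction leaving at most one edge of each colour between $u$ and $v$, each with non-zero weight. Without it, the ``type 1'' sum and $w(\mathbf{j}_{\{u,v\}})$ would both involve sums over parallel edges that could vanish.
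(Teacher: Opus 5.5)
Your proof is correct and is essentially the paper's argument: being contributing gives $w(\mathbf{i}_{V(G)\setminus\{u,v\}})\neq 0$, and colouring $u,v$ with $j$ and all other vertices with $i$ gives a non-monochromatic colouring of weight $w(e')\,w(\mathbf{i}_{V(G)\setminus\{u,v\}})\neq 0$, which contradicts the GHZ condition. One small correction: Step~1 does not need the reduction, because the type-1 sum already ranges only over matchings containing the specific edge $e$; uniqueness after the reduction is needed only for $w(\mathbf{j}_{\{u,v\}})=w(e')$ in Step~2.
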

\begin{proof}
Towards a contradiction, let there be a contributing edge $e=(v_1,v_2)$ and another edge $e'=(v_1,v_2)$ such that $c(e')\neq c(e)$. Let $V(G)=V$ and $V'=V\setminus\{v_1,v_2\}$. Let $c(e)={i}$. Since $e$ is contributing, we know that $w(e)w(\mathbf{i}_{V'}) \neq 0$ by definition. Therefore, $w(\mathbf{i}_{V'}) \neq 0$. Let $c(e')={j}$. Consider the non-monochromatic colouring in which $\{v_1,v_2\}$ get colour $j$ and $V'$ get colour $i$. Since $e'$ is the unique colour-$j$ edge between these vertices, \cref{colour_split_lemma} gives weight $w(e')w(\mathbf{i}_{V'})\neq 0$. But a non-monochromatic colouring should have weight $0$. Contradiction. 
\end{proof}

\begin{observation}
\label{min_degree_bound}
{For a loopless multigraph $G$ with $n=|V(G)|\geq 4$, we have $\mu(G) \leq \delta_{v}(G) \leq n-1$.}
\end{observation}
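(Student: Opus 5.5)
The plan is to fix one vertex $v$ of $G$ with exactly $\delta_v(G)$ neighbours and to build an injective map from the feasible monochromatic colours to the neighbours of $v$. The second inequality $\delta_v(G)\le n-1$ is trivial, since a vertex in a loopless multigraph has at most $n-1$ distinct neighbours. If $WC(G)=\emptyset$ then $\mu(G)=0$ and there is nothing to prove. So fix $(w,c)\in WC(G)$ attaining $\mu(G)$.

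First apply the parallel-edge reduction described before \cref{obs:contri_edges}. It preserves the GHZ property and the dimension, and it does not increase the number of neighbours of $v$. Hence it suffices to show that the number of feasible monochromatic colours is at most the number of neighbours of $v$ in the reduced graph, which I continue to call $G^w_c$.

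Let $i$ be a feasible monochromatic colour, so $w(\mathbf{i}_{V})=1$. Every colour-$i$ perfect matching covers $v$ by exactly one edge. Grouping the matchings according to this edge gives
\[
1=w(\mathbf{i}_V)=\sum_{e=(v,u),\,c(e)=i} w(e)\,w(\mathbf{i}_{V\setminus\{v,u\}}).
\]
Since the left-hand side is non-zero, some summand is non-zero. By definition, the corresponding edge $e_i=(v,u_i)$ of colour $i$ is a \emph{contributing} edge. Choose one such $e_i$ for each feasible monochromatic colour $i$, and map $i\mapsto u_i$.

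The key step is injectivity of this map. Suppose $u_i=u_j=u$ for distinct colours $i\neq j$. Then $e_i$ and $e_j$ are two edges of different colours between $v$ and $u$, and $e_i$ is contributing. Since $n\ge 4$, this contradicts \cref{obs:contri_edges}. Hence the map is injective, so the dimension is at most the number of neighbours of $v$ in the reduced graph. That number is at most $\delta_v(G)$.

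I expect no serious obstacle. The only points needing care are that the reduction never adds neighbours, and that the hypothesis $n\ge 4$ enters only through \cref{obs:contri_edges}. That hypothesis is genuinely needed there, as the two-vertex multigraph of Figure~\ref{fig:k2} shows.
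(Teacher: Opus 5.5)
Your proof is correct and follows essentially the same route as the paper. Both arguments use the parallel-edge reduction, take a contributing edge of each feasible colour at a vertex $v$ with fewest neighbours, and use \cref{obs:contri_edges} to stop two colours from sharing a neighbour; your injection is the paper's pigeonhole step, and the remaining differences are minor (you choose $v$ in the original $G$ rather than the reduced graph, and you explicitly justify the existence of a contributing edge at $v$ by expanding $w(\mathbf{i}_V)$ around $v$, which the paper only asserts).
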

\begin{proof}
{The claim is immediate when $\mu(G)=0$. Otherwise, apply the parallel-edge reduction to a colouring and weight assignment witnessing $\mu(G)$, and let the resulting GHZ graph be $H_c^w$. It has dimension $\mu(G)$ and $\delta_v(H)\leq\delta_v(G)$. Let $v$ be a vertex with exactly $\delta_v(H)$ neighbours. As each of the $\mu(G)$ monochromatic inherited vertex colourings has weight one, there must be at least one contributing edge of each of those $\mu(G)$ colours incident on $v$. If $\mu(G)>\delta_v(H)$, the pigeonhole principle gives a neighbour $u$ of $v$ with two contributing edges of different colours between $u$ and $v$, contradicting \cref{obs:contri_edges}. Therefore, $\mu(G)\leq\delta_v(H)\leq\delta_v(G)$.}
\end{proof}

\section{Maximum neighbour-degree: Proof of Theorem \ref{thm:max_degree}}
\label{sec:max_degree}
\begin{proof}
Let $d_{v}(u)$ denote the number of neighbours of a vertex $u \in G$. In $G_c^w$, let $d_{\mathbf{i}}(u)$ denote the number of edges of colour $i$ incident on $u$. Let $\Delta_{\mathbf{i}}(G_c^w) = \max\limits_{v \in V(G)}  d_{\mathbf{i}}(v)$ and $\delta_{\mathbf{i}}(G_c^w) = \min\limits_{v \in V(G)}  d_{\mathbf{i}}(v)$. We prove the following three lemmas first. 
\begin{lemma}
\label{max_deg_two}
In a GHZ graph $G_c^w$ with $|V(G)|>4$, if $\Delta_{\mathbf{i}}(G_c^w) \leq 2$, for all colours $i$, then $\mu(G,c,w)\leq 2$.
\end{lemma}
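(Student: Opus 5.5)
The plan is to argue by contradiction. Suppose $G_c^w$ is a GHZ graph with $|V(G)|>4$, every colour class has maximum degree at most $2$, and there are three feasible monochromatic colours $i,j,k$. By the preliminary discussion we may delete all redundant colour classes and apply the parallel-edge reduction. Then every colour class $G_c^w[\mathbf{i},V]$ is a disjoint union of paths and cycles.

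First I would describe the weight of $\mathbf{i}_S$ for a set $S$. By \cref{partition_split_lemma}, $w(\mathbf{i}_S)$ is the product over the components of $G_c^w[\mathbf{i},S]$. A path component on an even number of vertices has a unique perfect matching, so its weight is a product of nonzero edge weights and is nonzero; an odd path contributes $0$. An even cycle contributes the sum of the weights of its two alternating matchings. Since $w(\mathbf{i}_V)=1$, every component of colour class $i$ has nonzero weight. The useful consequence is that whenever $S$ cuts colour class $i$ only into even paths, $w(\mathbf{i}_S)\neq 0$ automatically; no cancellation is possible.

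The main tool is \cref{colour_split_lemma}. Suppose I find a nonempty proper subset $S$ with $w(\mathbf{i}_S)\neq0$ and $w(\mathbf{j}_{V\setminus S})\neq 0$. Then the non-monochromatic colouring giving $S$ colour $i$ and $V\setminus S$ colour $j$ has nonzero weight, a contradiction. In particular:
\begin{enumerate}
\item No nonempty proper union of $i$-components can have a complement that is a union of $j$-components; if it did, both weights would be products of nonzero component weights. Hence, for every pair of feasible colours, the hypergraph whose blocks are the $i$-components and the $j$-components is connected.
\item More strongly, I would cut an $i$-component, such as a cycle, into even-path segments, and choose the complementary part so that it splits colour class $j$ only into even paths.
\end{enumerate}
With three colours I would also use three-part colourings $S_i\cup S_j\cup S_k=V$. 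These give more freedom: every vertex has degree at least $3$ across the three colour classes, and each individual colour class is very sparse.

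Concretely, I would pick a vertex $v$ and its colour-$i$ neighbour $u$ along a contributing edge of colour $i$. Set $S=\{u,v\}$, and try to show that $V\setminus\{u,v\}$ carries nonzero weight in colour $j$ or in colour $k$. Removing two adjacent vertices from a union of paths and cycles leaves paths. The resulting colouring is then nonzero unless, in both colour $j$ and colour $k$, the removal creates an odd path. I would show these parity obstructions cannot hold simultaneously for all choices of contributing edge $uv$ of all three colours. To do so I would follow the odd-path conditions along the colour-$i$ component through $v$, using $n>4$ to guarantee that $V\setminus\{u,v\}$ is not too small; with $n=4$, $K_4$ is a genuine exception. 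The main obstacle is this parity case analysis: colour classes that are long cycles, for example three Hamiltonian cycles, where every small cut produces odd paths in the other colours. There I expect to need larger segments of even length, or a genuinely three-coloured partition.
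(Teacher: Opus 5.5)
Your proposal has a genuine gap. The decisive step, ``I would show these parity obstructions cannot hold simultaneously for all choices of contributing edge $uv$ of all three colours'', is never carried out, and as stated it is false. Suppose every colour class is a single perfect matching and the three matchings use pairwise distinct vertex pairs. This satisfies all your hypotheses: every component has non-zero weight and every edge is contributing. Yet deleting the ends $u,v$ of any edge leaves the partners of $u$ and of $v$ isolated in each of the other two colours, so every test colouring of the form ``$\{u,v\}$ in one colour, the rest in another'' has weight $0$.

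Concretely, on $\{1,\dots,8\}$ take colour $i$ on $12,34,56,78$, colour $j$ on $23,45,67,81$, and colour $k$ on $13,24,57,68$, with all weights $1$. None of your tests fires. The graph is nevertheless not GHZ, because $\{13,24,56,78\}$ induces $k$ on $\{1,2,3,4\}$ and $i$ on $\{5,6,7,8\}$ with weight $1$. In the six-vertex case the paper even needs a three-part colouring. You anticipate needing ``larger segments'' or three-part colourings but give no rule for finding them, so the proposal stops exactly where the work is.

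The missing idea makes any search for test colourings unnecessary, and it is already half-present in your second paragraph. You should push your observation to the following claim: under $\Delta_{\mathbf{i}}(G_c^w)\le 2$, $w(\mathbf{i}_S)\neq 0$ for every $S$ on which colour $i$ has a perfect matching. Suppose $w(\mathbf{i}_S)=0$. By \cref{partition_split_lemma} some component $P$ of $G_c^w[\mathbf{i},S]$ has $w(\mathbf{i}_P)=0$. This component has a perfect matching, so it is a path on an even number of vertices or an even cycle. A path has a unique perfect matching, so its weight is a product of non-zero edge weights. A cycle in a colour class of maximum degree $2$ is already a whole component of $G_c^w[\mathbf{i},V(G)]$, so $w(\mathbf{i}_P)=0$ would force $w(\mathbf{i}_{V(G)})=0$, contradicting non-redundancy.

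Hence, after deleting redundant colour classes, the GHZ condition forces that no non-monochromatic perfect matching exists at all: there is no destructive interference. The lemma then follows from Bogdanov's result for the interference-free case: if every perfect matching is monochromatic and $|V(G)|>4$, at most two colours are feasible. That is the paper's entire proof, and $|V(G)|>4$ enters only through that result, with $K_4$ the exception. If you want a self-contained argument instead of the citation, you must reprove that combinatorial statement. The example above shows this requires test sets larger than the ends of a single edge.
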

\begin{proof}
Note that if there is a counterexample $G_c^w$, by removing all of its redundant colours, we can construct a counterexample with no redundant colours. Therefore, it is sufficient to prove the statement for edge-coloured edge-weighted graphs with no redundant colours. Consider such a GHZ graph $G_c^w$.

If $G_c^w$ has no non-monochromatic perfect matchings and $|V(G)|>4$, then by the result of Bogdanov, $\mu(G,c,w)\leq 2$ \cite{Quantum_graphs}.

Therefore, let us assume that there is a non-monochromatic matching and its associated non-monochromatic inherited vertex colouring $vc$ on colours $[r]$ has weight zero. From \cref{colour_split_lemma}, there is a non-redundant colour in $[r]$, say $1$, for which $w(\mathbf{1}_{S})$ is zero, where 
$S=\{v \in V(G) \mid vc(v)=1\}$. From \cref{partition_split_lemma}, there is also a connected component $G_c^w[\mathbf{1},P]$ of $G_c^w[\mathbf{1},S]$ such that its weight, $w(\mathbf{1}_{P})=0$.

Since, $\Delta_{\mathbf{1}}(G_c^w) \leq 2$, $G_c^w[\mathbf{1},P]$ must be a cycle or path. Also, since $G_c^w[\mathbf{1},P]$ has a perfect matching in it, it must be an even cycle or a path with an odd number of edges. If it is a path with an odd number of edges, say $e_1e_2\ldots$, then  $w(\mathbf{1}_{P})=w(e_1)w(e_3)\ldots$, which cannot be zero. Therefore, it must be an even cycle.

Now consider the monochromatic inherited vertex colouring $\mathbf{1}_{V(G)}$. Note that $G_c^w[\mathbf{1},P]$ is disconnected from $G_c^w[\mathbf{1},{V(G)-P}]$ as it is a cycle and $\Delta_{\mathbf{1}}(G_c^w) \leq 2$. Therefore, from \cref{partition_split_lemma}, $w(\mathbf{1}_{V(G)})$ must be $0$, since $w(\mathbf{1}_{P})$ is zero. But $1$ is a non-redundant colour and hence its monochromatic inherited vertex colouring has weight $1$. Contradiction.
\end{proof}

\begin{lemma}
For a graph with $|V(G)|>4$, if $\Delta_{v}(G) \leq 3$, then $\mu(G) \leq 2$.
\label{maximum_degree_3}
\end{lemma}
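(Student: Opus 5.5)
The plan is to reduce to \cref{max_deg_two}. Fix $(w,c)\in WC(G)$ attaining $\mu(G)$ and suppose $\mu(G,c,w)\geq 3$. First apply the parallel-edge reduction and delete all redundant colour classes. Neither operation changes the GHZ property or the dimension, and neither increases $\Delta_v$. After this, every remaining colour has a feasible monochromatic perfect matching of weight $1$.

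The key claim is that some vertex would then need at least four neighbours, or else every colour class has maximum degree at most $2$. Suppose some colour $i$ has a vertex $u$ with $d_{\mathbf{i}}(u)=3$; after the reduction this means three distinct colour-$i$ neighbours $x,y,z$ of $u$. Since $\Delta_v\leq 3$, these are all of $u$'s neighbours. But $u$ must be covered by a colour-$j$ edge for each of the at least three monochromatic colours $j$. Hence at least two other colours $j,k$ also use edges to neighbours in $\{x,y,z\}$, so some neighbour carries edges of two different colours to $u$.

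By \cref{obs:contri_edges}, all parallel edges between $u$ and that neighbour are then non-contributing. The counting in the proof of \cref{min_degree_bound} then gives the contradiction. Each of the $\mu\geq 3$ colours needs a contributing edge at $u$, and no two contributing edges of different colours share a neighbour. So the contributing edges at $u$ go to at least $\mu$ distinct neighbours, using up all three neighbours when $\mu=3$. This is where it gets delicate: colour $i$ has three edges at $u$, and only one neighbour is reserved for a contributing colour-$i$ edge. The other two colour-$i$ edges go to neighbours whose contributing edge has another colour $j\neq i$, which \cref{obs:contri_edges} forbids. The case $\mu\geq 4$ is excluded directly by \cref{min_degree_bound}, since $\delta_v\leq 3$. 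So $\Delta_{\mathbf{i}}\leq 2$ for every colour, and \cref{max_deg_two} gives $\mu\leq 2$, a contradiction.

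The main obstacle is the middle step. One must check that, after the reductions, the at least three contributing colours at $u$ really occupy all three neighbours, so that a third colour-$i$ edge cannot coexist with them. Non-contributing edges of colour $i$ to a neighbour carrying a contributing edge of another colour are exactly what \cref{obs:contri_edges} rules out. This works because \cref{obs:contri_edges} forbids any pair of different-coloured edges between $u$ and $v$ once one of them is contributing. If that bookkeeping fails, the fallback is a direct cancellation argument mimicking \cref{max_deg_two}: a vertex of colour-$i$ degree $3$ in a graph with $\Delta_v\leq 3$ would have to lie in a weight-zero colour-$i$ component.
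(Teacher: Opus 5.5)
Your proposal is correct and follows essentially the same route as the paper. After the parallel-edge reduction and removal of redundant colours, either every colour class has maximum degree at most $2$ and \cref{max_deg_two} applies. Otherwise some $u$ is joined to all three of its neighbours in a single colour $i$, so the contributing edges at $u$ of the (at least two) other feasible colours must go to neighbours that already carry a colour-$i$ edge, contradicting \cref{obs:contri_edges}. Your detour through counting distinct neighbours, the separate appeal to \cref{min_degree_bound} for $\mu\geq 4$, and the fallback argument are all harmless but unnecessary.
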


\begin{proof}
Towards a contradiction, suppose that $\mu(G)\ge 3$.  Then there exists a
colouring $c$ and a weight function $w$ such that $\mu(G,c,w)\ge 3$.

Apply the parallel-edge reduction and delete all redundant colour classes.
These preserve the dimension and do not increase neighbour-degrees, so the
maximum remains at most $3$.  Relabelling the resulting graph as $G$, we may assume that every
colour class present is non-redundant.

If $\Delta_i(G^w_c)\le 2$ for every colour $i$, then \Cref{max_deg_two} gives
$\mu(G,c,w)\le 2$, a contradiction.  Hence there is a non-redundant colour,
say colour $1$, and a vertex $u$ such that $d_1(u)\ge 3$.

Since $\Delta_v(G)\le 3$, the vertex $u$ has exactly three neighbours, and each
of them is joined to $u$ by an edge of colour $1$.  Since colour $1$ is
non-redundant and $\mu(G,c,w)\ge 3$, there are at least two feasible colours
different from colour $1$.  For each such colour, there must be a contributing
edge of that colour incident with $u$.  Such an edge must go to one of the
three neighbours of $u$, but that neighbour is already joined to $u$ by an edge
of colour $1$.  By \Cref{obs:contri_edges}, the contributing edge would then be
non-contributing, a contradiction.
\end{proof}

\begin{lemma}
\label{max_deg_atleast_four}
{If $|V(G)|>4$ and $\Delta_v(G)\geq 4$, then
\[
    \mu(G)\leq \Delta_v(G)-2.
\]}
\end{lemma}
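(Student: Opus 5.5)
The plan is to argue by contradiction, following the template of \cref{maximum_degree_3} but now keeping track of how many neighbours of a single vertex are used up. Suppose $\Delta_v(G)\geq 4$ and $\mu(G)\geq \Delta_v(G)-1$. Since $\Delta_v(G)\geq 4$, this gives $\mu(G)\geq 3$. Fix a colouring $c$ and weights $w$ with $\mu(G,c,w)=\mu(G)$. Then apply the parallel-edge reduction and delete all redundant colour classes. Both operations preserve the GHZ property and the dimension, and they do not increase neighbour-degrees. Relabelling, we may assume every colour class present is non-redundant, and that between any two vertices there is at most one edge of each colour.

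\textbf{Step 1: a vertex of high degree in one colour.} If $\Delta_{\mathbf{i}}(G_c^w)\leq 2$ for every colour $i$, then \cref{max_deg_two} (which needs $|V(G)|>4$) gives $\mu(G,c,w)\leq 2$, a contradiction. Hence there is a colour, say colour $1$, and a vertex $u$ with $d_{\mathbf{1}}(u)\geq 3$. After the reduction there is at most one colour-$1$ edge between $u$ and any given vertex. So the colour-$1$ edges at $u$ go to at least three distinct neighbours $x_1,x_2,x_3$ of $u$.

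\textbf{Step 2: counting neighbours of $u$.} Colour $1$ is non-redundant, so besides colour $1$ there are at least $\mu(G,c,w)-1$ further feasible monochromatic colours. For each such colour $j$, the colouring $\mathbf{j}_{V(G)}$ has weight $1$. Since the weight of $\mathbf{j}_{V(G)}$ is the sum, over the colour-$j$ edges $e$ at $u$, of the total weight of the colour-$j$ perfect matchings containing $e$, some colour-$j$ edge $e_j=(u,y_j)$ must be contributing. Now consider where $y_j$ can lie:
\begin{itemize}
\item $y_j\notin\{x_1,x_2,x_3\}$, since otherwise the pair $u,y_j$ carries edges of two different colours, one of them contributing, contradicting \cref{obs:contri_edges}.
\item For distinct colours $j\neq j'$, we have $y_j\neq y_{j'}$ by the same observation.
\end{itemize}
Thus $u$ has at least $3+(\mu(G,c,w)-1)=\mu(G,c,w)+2$ distinct neighbours. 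Therefore $\mu(G)=\mu(G,c,w)\leq d_v(u)-2\leq \Delta_v(G)-2$, contradicting the assumption. The inequality $\Delta_v(G)-2\leq |V(G)|-3$ is immediate from $\Delta_v(G)\leq |V(G)|-1$.

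\textbf{Main obstacle.} The delicate point is justifying that each feasible colour really has a \emph{contributing} edge at $u$, and that the distinct-neighbour count survives the reductions. This is why I would perform the parallel-edge reduction first and use the decomposition of $w(\mathbf{j}_{V(G)})$ by the edge covering $u$. I also need to check that deleting redundant colours does not destroy the hypothesis $|V(G)|>4$ used in \cref{max_deg_two}. It does not, because deleting edges leaves the vertex set unchanged.
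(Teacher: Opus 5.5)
Your proposal is correct and follows the paper's proof essentially step for step. It uses the same contradiction hypothesis $\mu(G)\geq \Delta_v(G)-1\geq 3$, the same parallel-edge and redundant-colour reductions, the same appeal to Lemma~\ref{max_deg_two} to find a vertex $u$ with $d_{\mathbf{1}}(u)\geq 3$, and the same neighbour count via Observation~\ref{obs:contri_edges}. You also make explicit two points the paper leaves implicit: every colour-$j$ perfect matching uses exactly one edge at $u$, so these contributions sum to $w(\mathbf{j}_{V(G)})=1$ and give a contributing edge at $u$; and the parallel-edge reduction guarantees that the colour-$1$ edges at $u$ reach $d_{\mathbf{1}}(u)$ distinct neighbours.
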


\begin{proof}
Let $D=\Delta_v(G)$.  Towards a contradiction, suppose that
$\mu(G)\ge D-1$.  Then there exists a colouring $c$ and a weight function $w$
such that $\mu(G,c,w)\ge D-1$.  Since $D\ge 4$, this implies
$\mu(G,c,w)\ge 3$.

Apply the parallel-edge reduction and delete all redundant colour classes.
These preserve the dimension and do not increase neighbour-degrees, so the
maximum remains at most $D$.  Relabelling the resulting graph as $G$, we may assume that every
colour class present is non-redundant.

If $\Delta_i(G^w_c)\le 2$ for every colour $i$, then \Cref{max_deg_two} gives
$\mu(G,c,w)\le 2$, a contradiction.  Hence there is a non-redundant colour,
say colour $1$, and a vertex $u$ such that $d_1(u)\ge 3$.

Since colour $1$ is feasible and $\mu(G,c,w)\ge D-1$, there are at least
$D-2$ feasible colours different from colour $1$.  For each such colour, choose
one contributing edge of that colour incident with $u$.  By \Cref{obs:contri_edges},
none of these contributing edges can go to a neighbour of $u$ that is already
joined to $u$ by an edge of colour $1$.  Also by \Cref{obs:contri_edges}, two such
contributing edges of different colours cannot go to the same neighbour.

Therefore $u$ has at least
\[
d_1(u)+(D-2) \ge 3+(D-2)=D+1
\]
distinct neighbours, contradicting their upper bound $D$.
Thus the original graph has matching index at most $D-2$.
\end{proof}

{Since $\Delta_v(G) \leq |V(G)|-1$, Lemmas~\ref{maximum_degree_3} and~\ref{max_deg_atleast_four} imply Theorem~\ref{thm:max_degree}.}
\end{proof}

\section{Edge connectivity: Proof of Theorem \ref{thm:edge_connectivity}}
\label{sec:edge_connectivity}
\begin{proof}
Towards a contradiction, suppose that there is a loopless connected multigraph $G''$
such that
\[
    \mu(G'')\ge \lambda(G'')+1.
\]
Choose a colouring $c$ and a weight function $w$ witnessing this dimension.
Delete all redundant edges and then all redundant colour classes. The dimension
is unchanged. Let the resulting graph be $G$. Then
\[
    \mu(G,c,w)=\mu(G'')\ge \lambda(G'')+1.
\]
Moreover, $G$ is a subgraph of $G''$. Suppose that $G$ were disconnected,
with component vertex sets $C_1,\ldots,C_s$, where $s\geq 2$.
For each feasible monochromatic colour $i$, \Cref{partition_split_lemma},
with its factors grouped by components of $G$, gives
$1=w(\mathbf{i}_{V(G)})=\prod_{t=1}^{s}w(\mathbf{i}_{C_t})$.
If two such colours $i,j$ existed, assigning colour $i$ to $C_1$ and colour
$j$ to all other components would, by \Cref{colour_split_lemma} and the
same component factorisation, give the non-zero weight
$w(\mathbf{i}_{C_1})\prod_{t=2}^{s}w(\mathbf{j}_{C_t})$.
This contradicts the GHZ condition. Thus a disconnected $G$ would satisfy
$\mu(G,c,w)\leq 1$, contradicting $\lambda(G'')\geq 1$ and
$\mu(G,c,w)\geq\lambda(G'')+1$. Hence $G$ is connected.
In particular,
\[
    \lambda(G)\le \lambda(G'').
\]

Let $(S,S')$ be a minimum edge cut of $G$, so $S'=V(G)\setminus S$ and
$|(S,S')|=\lambda(G)$. Since $G$ has a perfect matching, $|V(G)|$ is even,
and therefore $|S|$ and $|S'|$ have the same parity.

First suppose that $|S|$ and $|S'|$ are odd.  Every monochromatic perfect
matching must contain at least one edge of the cut $(S,S')$.  Therefore every
feasible monochromatic colour appears on at least one edge of the cut. Hence
\[
\mu(G,c,w)\le |(S,S')|=\lambda(G)\le \lambda(G''),
\]
contradicting $\mu(G,c,w)\ge \lambda(G'')+1$.

Now suppose that $|S|$ and $|S'|$ are even.  Let $A$ be the set of feasible
colours for which no monochromatic perfect matching crosses the cut
$(S,S')$, and put $r=|A|$.  Let $B$ be the remaining feasible colours and put
$r'=|B|$.  Thus
\[
\mu(G,c,w)=r+r'.
\]

For every colour $i\in B$, some colour-$i$ monochromatic perfect matching
crosses the cut.  Since both sides of the cut have even cardinality, such a
matching uses at least two cut edges of colour $i$.  Therefore
\[
\lambda(G)=|(S,S')|\ge 2r'.
\]

We claim that $r\le 1$.  Suppose not, and choose two colours $1,2\in A$.
Since no colour-$i$ monochromatic perfect matching crosses the cut for
$i=1,2$, all such matchings split between the two sides of the cut.  Hence
\[
w(1_{V(G)})=w(1_S)w(1_{S'})=1
\quad\text{and}\quad
w(2_{V(G)})=w(2_S)w(2_{S'})=1.
\]
In particular, $w(1_S)$ and $w(2_{S'})$ are non-zero.

Consider the inherited vertex colouring $vc$ in which every vertex of $S$ gets
colour $1$ and every vertex of $S'$ gets colour $2$.  Any perfect matching
inducing $vc$ contains no cut edge: a cut edge in an ordinary edge-colouring
would give the same colour to its two endpoints.  Therefore, by
\Cref{colour_split_lemma},
\[
w(vc)=w(1_S)w(2_{S'})\ne 0.
\]
This is impossible because $vc$ is non-monochromatic.  Hence $r\le 1$.

If $r'\ge 1$, then
\[
\mu(G,c,w)=r+r'\le 1+r'\le 2r'\le \lambda(G)\le \lambda(G''),
\]
again contradicting $\mu(G,c,w)\ge \lambda(G'')+1$.  If $r'=0$, then
$\mu(G,c,w)=r\le 1$, also a contradiction.  Therefore no such $G''$ exists.
\end{proof}

\section{Vertex connectivity for simple graphs: Proof of Theorem \ref{thm:vertex_connectivity}}
\label{sec:vertex_connectivity}
\begin{proof}
If $G$ is a complete graph, from \cref{min_degree_bound}, $\mu(G) \leq |V(G)|-1$. As $\kappa(G)=|V(G)|-1$, it is easy to see that the given statement holds. We now consider simple connected graphs not isomorphic to complete graphs.
\begin{lemma}
\label{vertex_conn_strong_version}
If $T$ is a minimum vertex separator of $G$, then
\[
    \mu(G) \leq 2^{|T|-1}+|E(T)|.
\]
\end{lemma}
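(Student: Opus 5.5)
The plan is to split the feasible monochromatic colours into two groups: those that can be charged to an edge inside $T$, and those that can be charged to a "signature" subset of $T$. Since $T$ is a vertex separator of the non-complete graph $G$, I write $V(G)\setminus T = A\cup B$. Here $A$ is the vertex set of one component of $G-T$, $B$ is the union of the remaining components, $A,B\neq\emptyset$, and there is no edge between $A$ and $B$. Fix a colouring $c$ and weights $w$ attaining $\mu(G)$. Let $E(T)$ denote the edges of $G[T]$.

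\textbf{Step 1 (decomposition of a monochromatic weight).} Fix a feasible monochromatic colour $i$. Every colour-$i$ perfect matching $M$ either uses an edge of $G[T]$ or matches every vertex of $T$ into $A$ or into $B$. In the second case, let $X\subseteq T$ be the set of vertices of $T$ matched into $A$. Then $M$ is the disjoint union of a colour-$i$ perfect matching of $G[A\cup X]$ and one of $G[B\cup(T\setminus X)]$, because there are no $A$--$B$ edges. Summing over matchings gives
\[
1=w(\mathbf{i}_{V(G)})=\sum_{X\subseteq T} w(\mathbf{i}_{A\cup X})\,w(\mathbf{i}_{B\cup (T\setminus X)}) + R_i ,
\]
where $R_i$ is the total weight of the colour-$i$ perfect matchings that use an edge of $G[T]$. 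A nonzero term in the sum requires $|A\cup X|$ to be even. This fixes the parity of $|X|$, so only $2^{|T|-1}$ subsets $X$ can contribute (using $|T|\ge 1$).

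\textbf{Step 2 (two classes of colours).} Call $X$ \emph{good for $i$} if $w(\mathbf{i}_{A\cup X})\,w(\mathbf{i}_{B\cup(T\setminus X)})\neq 0$.
\begin{itemize}
\item If colour $i$ has no edge inside $G[T]$, then $R_i=0$. By Step~1 some $X$ is then good for $i$.
\item Otherwise colour $i$ appears on some edge of $E(T)$. Because $G$ is simple, each edge carries exactly one colour. Charging each such colour to one of its edges in $E(T)$ is therefore injective, so there are at most $|E(T)|$ colours of this kind.
\end{itemize}

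\textbf{Step 3 (injectivity on signatures).} Suppose the same $X$ were good for two distinct feasible colours $i\neq j$. Consider the non-monochromatic vertex colouring giving colour $i$ to $A\cup X$ and colour $j$ to $B\cup(T\setminus X)$. This colouring is feasible, since a colour-$i$ perfect matching of $G[A\cup X]$ exists and so does a colour-$j$ one of $G[B\cup(T\setminus X)]$. By \cref{colour_split_lemma} its weight is $w(\mathbf{i}_{A\cup X})\,w(\mathbf{j}_{B\cup(T\setminus X)})$. Both factors are nonzero because $X$ is good for both colours, which contradicts the GHZ condition. Hence choosing one good $X$ for each colour of the first kind is an injective map into the $2^{|T|-1}$ subsets of the correct parity.

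Adding the two classes gives $\mu(G)\le 2^{|T|-1}+|E(T)|$. Since $T$ is a minimum separator, $|T|=\kappa(G)$ and $|E(T)|\le\binom{\kappa(G)}{2}$, which yields \cref{thm:vertex_connectivity}.

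The step I expect to need the most care is the bookkeeping in Step~1. I must verify that the matchings not using $T$-internal edges correspond bijectively to pairs of perfect matchings of $G[A\cup X]$ and $G[B\cup(T\setminus X)]$. In particular, these induced subgraphs must contain no edges of $G[T]$ used in a way that double-counts. Such edges are exactly what the term $R_i$ absorbs, so the decomposition is clean precisely when colour $i$ avoids $E(T)$. Here simplicity is essential: it gives one colour per edge, so the charging in Step~2 is injective.
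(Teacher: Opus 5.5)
Your proof is correct and is essentially the paper's argument: each colour avoiding $E(T)$ has a nonzero split term for some $X$ of the right parity, two such colours cannot share an $X$ because the two-coloured split colouring would have nonzero weight by \cref{colour_split_lemma}, and the remaining colours are charged injectively to edges of $E(T)$ (you phrase the paper's pigeonhole step as an injection). Two small nits: the displayed identity with $R_i$ is not literally valid for colours having edges inside $T$, since such matchings also get counted in the sum over $X$, but you only use it when $R_i=0$, where it is exact; and the injectivity of the charging comes from each edge carrying a single colour in this model, not from simplicity, which is needed only for $|E(T)|\le\binom{\kappa(G)}{2}$.
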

\begin{proof}
Let $k=|T|=\kappa(G)$. Let a connected component of $G[V(G)-T]$ be $G[S]$, and let $S'=V(G)-T-S$. Towards a contradiction, assume that
\[
    \mu(G) \geq 2^{k-1}+|E(T)|+1.
\]
Therefore, there exists a colouring $c$ and a weight function $w$ such that
\[
    \mu(G)=\mu(G,c,w) \geq 2^{k-1}+|E(T)|+1.
\]
Since there are $|E(T)|$ edges with both endpoints in $T$, there can be at most $|E(T)|$ colour classes containing an edge in $T$. Therefore, there are at least $2^{k-1}+1$ non-redundant colour classes that do not contain any edge from $G_c^w[T]$. Let those non-redundant colour classes be $\{1,2,\ldots,2^{k-1}+1\}$.

For each $i\in [2^{k-1}+1]$, let $M^i_{V(G)}\ne\emptyset$ be the set of
colour-$i$ monochromatic perfect matchings of $G^w_c$.

For $J\subseteq T$, write $J^c=T\setminus J$, and define
\[
a_i(J)=w(i_{S\cup J})
\quad\text{and}\quad
b_i(J)=w(i_{S'\cup J^c}).
\]
Let $
\mathcal{T}=\{J\subseteq T:\ |S\cup J|\text{ is even}\}.$
Then $|\mathcal{T}|=2^{k-1}$.

Since the colours under consideration have no edge inside $T$, every
colour-$i$ monochromatic perfect matching of $G$ splits uniquely as a
colour-$i$ perfect matching of $G[S\cup J]$ together with a colour-$i$ perfect
matching of $G[S'\cup J^c]$, for a unique $J\in\mathcal{T}$.  Therefore
\[
\sum_{J\in\mathcal{T}} a_i(J)b_i(J)
=
w(i_{V(G)})
=
1.
\]
Hence, for each such colour $i$, there exists $J_i\in\mathcal{T}$ such that
$
a_i(J_i)b_i(J_i)\ne 0.
$
Since there are $2^{k-1}+1$ colours and only
$2^{k-1}$ choices for $J_i$, two colours, say $1$ and $2$, satisfy
$J_1=J_2=J$.

Thus
\[
a_1(J)b_1(J)\ne 0
\quad\text{and}\quad
a_2(J)b_2(J)\ne 0,
\]
so in particular
\[
a_1(J)b_2(J)\ne 0.
\]
Consider the inherited vertex colouring in which $S\cup J$ is coloured $1$ and
$S'\cup J^c$ is coloured $2$.  By \Cref{colour_split_lemma}, its weight is
\[
a_1(J)b_2(J)\ne 0,
\]
contradicting the GHZ condition because this inherited vertex colouring is
non-monochromatic.
\end{proof}
Since $T$ is a minimum vertex separator, $|T|=\kappa(G)$, and since $G$ is simple we have $|E(T)|\leq \binom{\kappa(G)}{2}$. Hence \cref{thm:vertex_connectivity} follows.
\end{proof}

\section{Graphs with six or eight vertices: Proof of Theorem \ref{thm:six_eight_vertices}}
\label{sec:six_vertices}
\subsection{Six-vertex case}
\begin{proof}
Towards a contradiction, let $\mu(G) \geq 3$. Therefore, there exists a  colouring $c$ and weight function $w$, such that $\mu(G,c,w)\geq 3$. Let three of its non-redundant colours be $1$, $2$ and $3$. Consider the union of a colour $1$ perfect matching and a colour $2$ perfect matching. Their union is a disjoint union of even cycles whose edges alternate between colours $1,2$. Since $G$ is simple and the two matchings are edge-disjoint, every such cycle has length at least $4$. With only $6$ vertices, their union must therefore be a single $6$-cycle. Let the vertices of such a $6$ cycle be labelled $abcdef$ in clockwise direction. Without loss of generality, let $c(ab)=c(cd)=c(ef)=1$ and $c(bc)=c(de)=c(fa)=2$.

{We call $(a,d)$, $(b,e)$, and $(c,f)$ the antipodal pairs.} We claim that any perfect matching $M$ formed by non-cyclic edges must have at least one edge containing an antipodal pair. Suppose not. Observe that any edge connecting a vertex from $\{a,c,e\}$ to $\{b,d,f\}$ is either a cyclic edge or contains an antipodal pair. {Thus $M$ would have to match the vertices of $\{a,c,e\}$ among themselves and the vertices of $\{b,d,f\}$ among themselves, which is impossible because both sets have odd cardinality.}

Therefore, for the monochromatic matching of colour $3$, there is an edge containing an antipodal pair, say $ad$, without loss of generality. Consider the non-monochromatic inherited vertex colouring in which $a,d$ are coloured $3$, $b,c$ are coloured $2$ and $e,f$ are coloured $1$. By \cref{colour_split_lemma}, the weight of such a vertex colouring is $w(ad)w(bc)w(ef)$. Since the vertex colouring is non-monochromatic, $w(vc) = 0$. Therefore, one of these three edges must have weight zero. Contradiction.
\end{proof} 

\subsection{Eight-vertex case for Theorem~\ref{thm:six_eight_vertices}}
\label{sec:eight_vertices}
For the eight vertex case, the proof has two main cases. In Case~1, every pair of monochromatic perfect matchings forms two disjoint $4$-cycles; we show that a third colour then forces an alternating Hamiltonian cycle, contradicting the case assumption. In Case~2, we fix an alternating red-green Hamiltonian cycle and classify the remaining colours according to whether one of their matching edges separates the cycle into two even parts. Figures~\ref{fig:eightv_c3}--\ref{fig:eightv_c4} list the resulting configurations up to dihedral symmetry and interchange of colour names.

\begin{proof}
Towards a contradiction, let $\mu(G) \geq 4$. Therefore, there exists a  colouring $c$ and weight function $w$, such that $\mu(G,c,w)\geq 4$. Let four of its non-redundant colours be $1$ (red), $2$ (green), $3$ (blue) and $4$ (brown). Consider the union of a perfect matching of two different colours. Their union is a disjoint union of even cycles with alternating colours. Since $G$ is simple and the matchings are edge-disjoint, every such cycle has length at least $4$. With $8$ vertices, their union is therefore either an $8$-cycle or two $4$-cycles.

\textbf{Case 1.} We first consider the case where the union of any two
monochromatic perfect matchings of different colours is always two disjoint
$4$-cycles.  Consider a red and a green perfect matching.  Their union forms
two disjoint cycles $1234$ and $5678$, where
\[
(1,2),(3,4),(5,6),(7,8)
\]
are red and
\[
(1,4),(2,3),(5,8),(6,7)
\]
are green.

Consider the non-monochromatic vertex colouring in which $1234$ gets colour
red and $5678$ gets colour green.  By \Cref{colour_split_lemma}, this colouring has weight
\[
w(1_{\{1,2,3,4\}})\,w(2_{\{5,6,7,8\}}),
\]
and this weight must be zero.  Hence one of the two factors is zero.  Without
loss of generality, assume
\[
w(1_{\{1,2,3,4\}})=0.
\]
Since the red edges $(1,2)$ and $(3,4)$ are present, this forces the red edges
$(1,3)$ and $(2,4)$ to be present and
\[
w((1,2))w((3,4))+w((1,3))w((2,4))=0.
\]

Now consider the vertex colouring in which $1234$ gets colour green and
$5678$ gets colour red.  Again by \Cref{colour_split_lemma}, one of
$w(2_{\{1,2,3,4\}})$ and $w(1_{\{5,6,7,8\}})$ must be zero.  But
\[
w(2_{\{1,2,3,4\}})=w((1,4))w((2,3))\ne 0.
\]
Therefore
\[
w(1_{\{5,6,7,8\}})=0.
\]
This forces the red edges $(5,7)$ and $(6,8)$ to be present and
\[
w((5,6))w((7,8))+w((5,7))w((6,8))=0.
\] We know that there should be a blue perfect matching. Note that {all possible edges within $\{1,2,3,4\}$ and within $\{5,6,7,8\}$ are red or green.} Therefore, all blue edges must have one vertex from $1,2,3,4$ and the other from $5,6,7,8$. Without loss of generality, let $(2,5)$ be an edge from the blue perfect matching. 
\begin{figure}[t!]
    \centering
\centering    
\begin{subfigure}[b]{0.4\textwidth}
         \centering
         \fbox{\includegraphics[width=50mm]{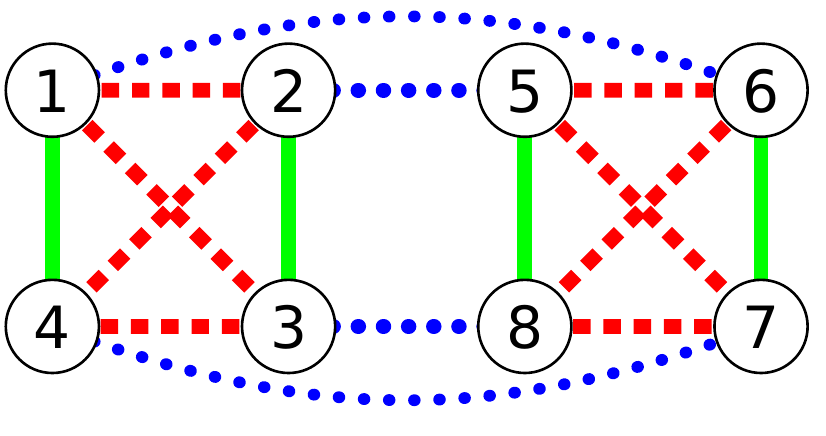}}
         \caption{Case 1.1.1}
         \label{fig:c31}
\end{subfigure}
\hfill
\begin{subfigure}[b]{0.4\textwidth}
         \centering
         \fbox{\includegraphics[width=50mm]{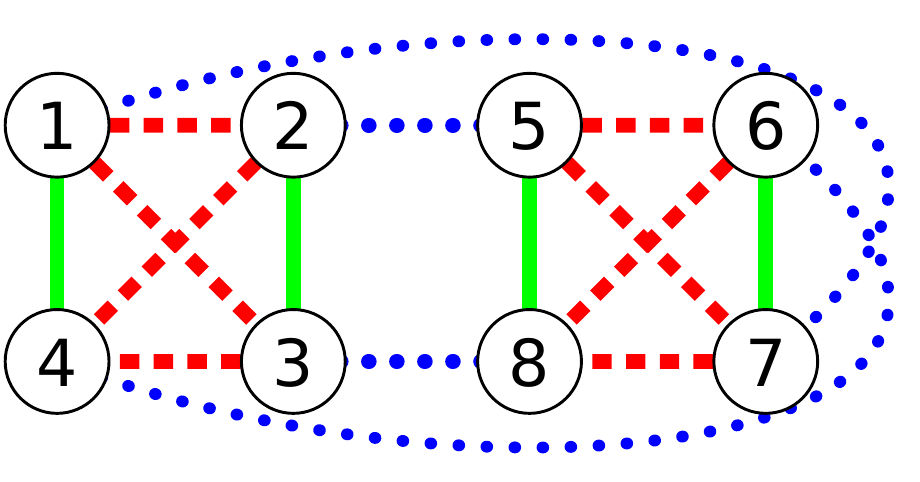}}
         \caption{Case 1.1.2 }
         \label{fig:c32}
\end{subfigure}
\hfill
\begin{subfigure}[b]{0.4\textwidth}
         \centering
         \fbox{\includegraphics[width=50mm]{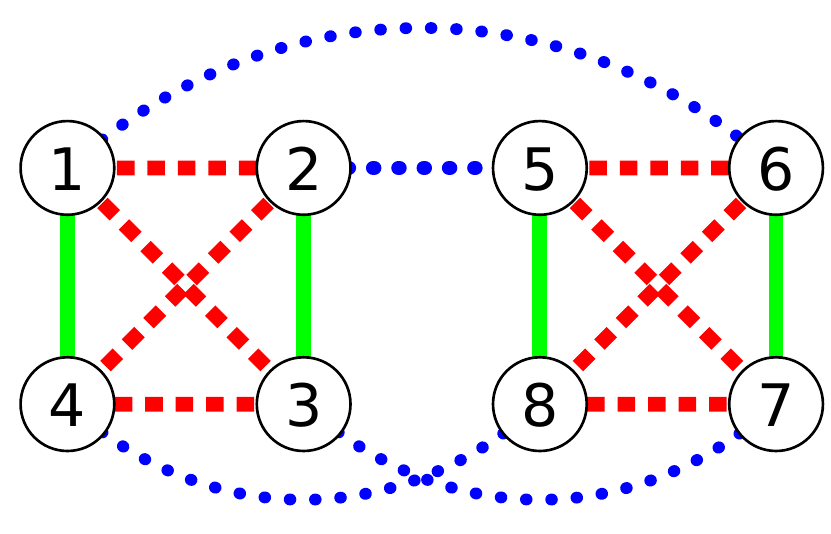}}
         \caption{Case 1.2.1 }
         \label{fig:c33}
\end{subfigure}
\hfill
\begin{subfigure}[b]{0.4\textwidth}
         \centering
         \fbox{\includegraphics[width=50mm]{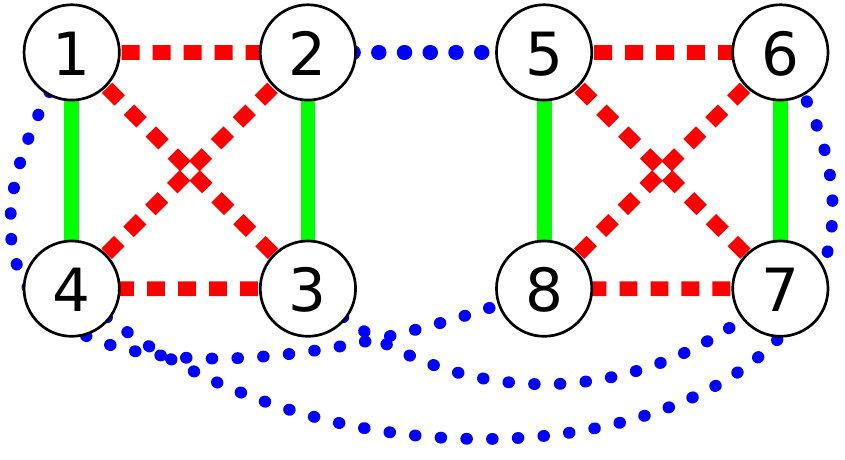}}
         \caption{Case 1.2.2}
         \label{fig:c34}
\end{subfigure}

\caption{Case 1. Edge styles in Figures~\ref{fig:eightv_c3}--\ref{fig:eightv_c4}: red dashed, green solid, blue dotted, and brown dash-dotted.}
\label{fig:eightv_c3}
\end{figure}

\begin{itemize}
    \item \textbf{Case 1.1} The blue perfect matching contains $(3,8)$. Then the blue perfect matching can be completed in two ways as shown in \cref{fig:c31} and \cref{fig:c32}. 
    \begin{itemize}
        \item \textbf{Case 1.1.1} (\cref{fig:c31}.) There is a blue-red alternating Hamiltonian cycle $16834752$ which is a union of two perfect matchings of different colours. By assumption such perfect matchings are not present in this case. Contradiction.
        \item \textbf{Case 1.1.2} (\cref{fig:c32}.) There is a blue-red alternating Hamiltonian cycle $17834652$ which is a union of two perfect matchings of different colours. By assumption such perfect matchings are not present in this case. Contradiction.
    \end{itemize}
    \item \textbf{Case 1.2} Vertex $3$ can have an edge with $6$ or $7$. Note that both are identical red neighbours and hence without loss of generality let the edge be with $7$. Then the blue perfect matching can be completed in two ways. 
\begin{itemize}
\item \textbf{Case 1.2.1} (\cref{fig:c33}.) There is a blue-red alternating Hamiltonian cycle $16843752$. It is a union of two perfect matchings of different colours. Contradiction.
\item \textbf{Case 1.2.2} (\cref{fig:c34}.) There is a blue-red alternating Hamiltonian cycle $12564378$. It is a union of two perfect matchings of different colours. Contradiction.
    \end{itemize}
\end{itemize}

\textbf{Case 2.} We classify the remaining cases with respect to the fixed alternating
red-green Hamiltonian cycle. Label its vertices $1,\ldots,8$ in cyclic order.
For a monochromatic perfect matching $M$ of a colour
$i\notin\{\text{red},\text{green}\}$, call an edge of $M$ \emph{crossing}
if removing its endpoints from the Hamiltonian cycle leaves two paths of
even order. Say that $i$ is crossing if some such $M$ contains a crossing
edge. There are two cases.
\begin{enumerate}
    \item  Some remaining non-redundant colour, say blue, is not crossing.  Then every edge of
every blue monochromatic perfect matching joins vertices of the same parity in
the cyclic ordering.
\item Every non-redundant colour other than red and green is crossing.
\end{enumerate}
\begin{figure}[t!]
    \centering
\centering    
\begin{subfigure}[b]{0.3\textwidth}
         \centering
         \fbox{\includegraphics[width=\dimexpr\linewidth-2\fboxsep-2\fboxrule\relax]{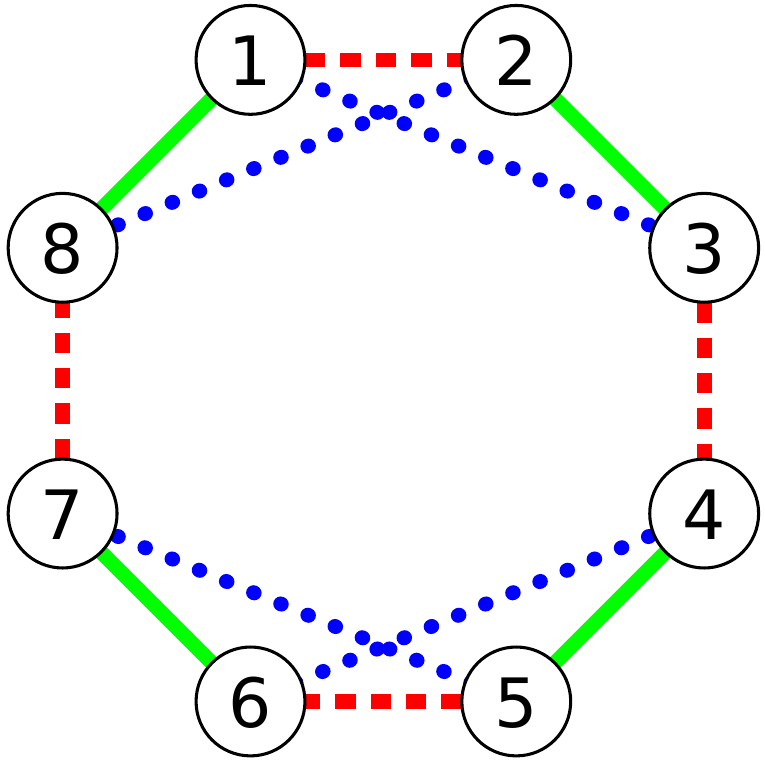}}
         \caption{Case 2.1.1}
         \label{fig:c11}
\end{subfigure}
\hfill
\begin{subfigure}[b]{0.3\textwidth}
         \centering
         \fbox{\includegraphics[width=\dimexpr\linewidth-2\fboxsep-2\fboxrule\relax]{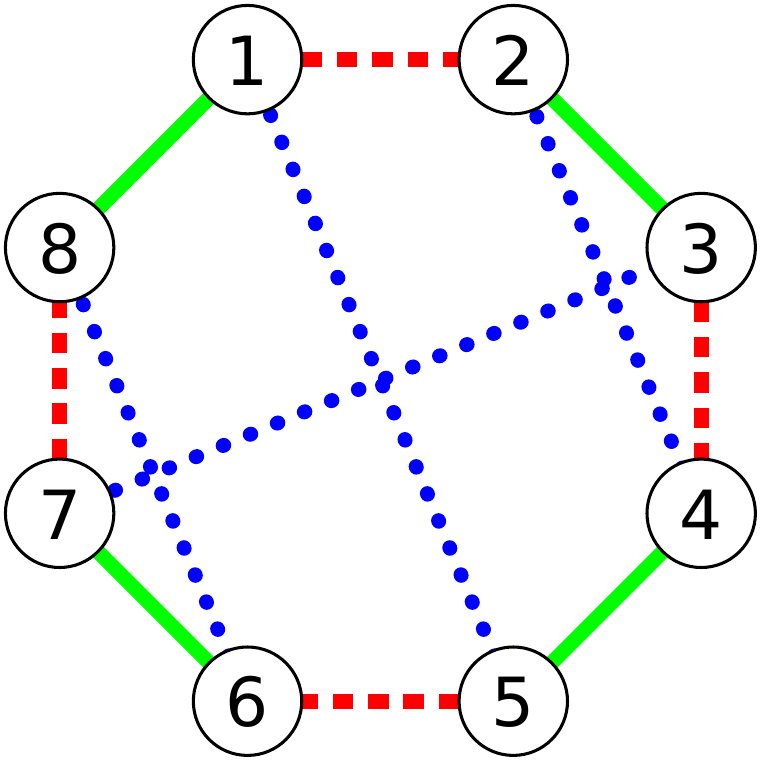}}
         \caption{Case 2.1.2 }
         \label{fig:c12}
\end{subfigure}
\hfill
\begin{subfigure}[b]{0.3\textwidth}
         \centering
         \fbox{\includegraphics[width=\dimexpr\linewidth-2\fboxsep-2\fboxrule\relax]{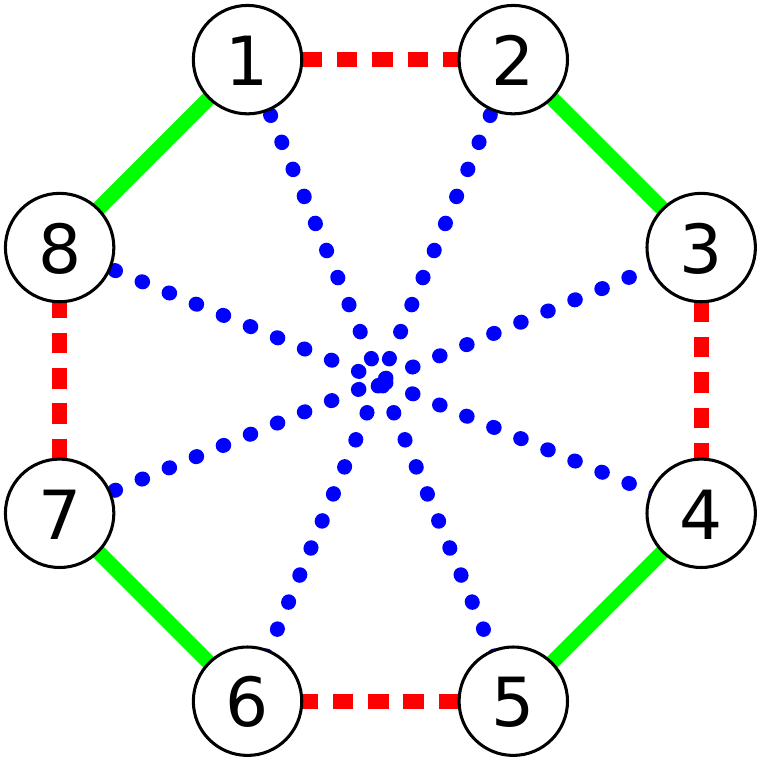}}
         \caption{Case 2.1.3}
         \label{fig:c13}
\end{subfigure}

\caption{Case 2.1}
\label{fig:eightv_c1}
\end{figure}

\textbf{Case 2.1.} Fix a blue monochromatic perfect matching $M_b$.
Since blue is not crossing, $M_b$ pairs vertices of equal parity: the four
odd vertices among themselves and the four even vertices among themselves.
Up to dihedral symmetry of the Hamiltonian cycle and interchange of red
and green, its three possible types are:
\[
\begin{aligned}
B_1&=\{(1,3),(2,8),(4,6),(5,7)\},\\
B_2&=\{(1,5),(2,4),(3,7),(6,8)\},\\
B_3&=\{(1,5),(2,6),(3,7),(4,8)\}.
\end{aligned}
\]
These are precisely the three cases shown in \Cref{fig:eightv_c1}.

\textbf{Case 2.1.1} (\cref{fig:c11}).  Consider the non-monochromatic vertex
colouring in which $1,2,3,8$ get colour blue and $4,5,6,7$ get colour green.
By \Cref{colour_split_lemma}, the weight of this colouring is
\[
w((1,3))w((2,8))w((4,5))w((6,7))\ne 0.
\]
This contradicts the GHZ condition, since every non-monochromatic inherited
vertex colouring must have weight $0$.

\textbf{Case 2.1.2} (\cref{fig:c12}). Consider the non-monochromatic vertex colouring in which $1,2$ get the colour red, $4,5$ get colour green and $3,6,7,8$ get the colour blue. From \cref{colour_split_lemma}, the weight of such a colouring is $w((1,2))w((4,5))w((3,7))w((6,8)) \neq 0$. But a non-monochromatic vertex colouring should have weight $0$. Contradiction.

We now record the structure forced by a crossing edge in a selected
monochromatic perfect matching $M_i$, for a colour $i$ other than red and
green. Up to a dihedral symmetry and interchange of red and green, let
$(8,3)$ be a crossing edge of $M_i$.
Consider the non-monochromatic vertex colouring assigning colour $i$ to
$3,8$, red to $1,2$, and green to $4,5,6,7$.
By \Cref{colour_split_lemma}, its weight is
$w((8,3))w((1,2))w(\mathbf{2}_{\{4,5,6,7\}})=0$.
The first two factors are non-zero, so
$w(\mathbf{2}_{\{4,5,6,7\}})=0$.
The green matching $(4,5),(6,7)$ has non-zero weight; its only possible
alternative is $(4,6),(5,7)$, since $(5,6)$ is red.
Thus $(4,6)$ and $(5,7)$ must both be green.
After $(8,3)$ is removed from $M_i$, four unmatched vertices lie in
$\{4,5,6,7\}$ and only two lie outside it. Therefore $M_i$ contains an
edge within $\{4,5,6,7\}$. The only available pair is $(4,7)$:
all other pairs already have red or green edges, and $G$ is simple.
The remaining vertices $1,2,5,6$ can be matched in exactly two ways,
$(1,5),(2,6)$ or $(1,6),(2,5)$, because $(1,2),(5,6)$ are red.
These are the two crossing-matching structures shown in
\Cref{fig:eightv_c2}; the derivation applies to any colour $i$, including brown.

\begin{figure}[htbp]
    \centering
\centering    
\begin{subfigure}[b]{0.3\textwidth}
         \centering
         \fbox{\includegraphics[width=\dimexpr\linewidth-2\fboxsep-2\fboxrule\relax]{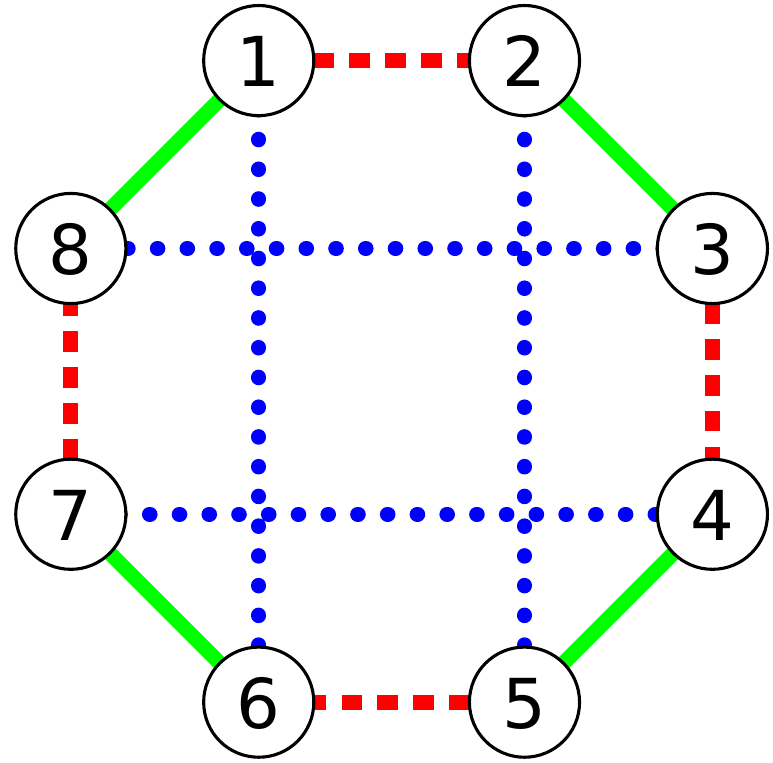}}
         \caption{}
         \label{fig:c21}
\end{subfigure}
\hfill
\begin{subfigure}[b]{0.3\textwidth}
         \centering
         \fbox{\includegraphics[width=\dimexpr\linewidth-2\fboxsep-2\fboxrule\relax]{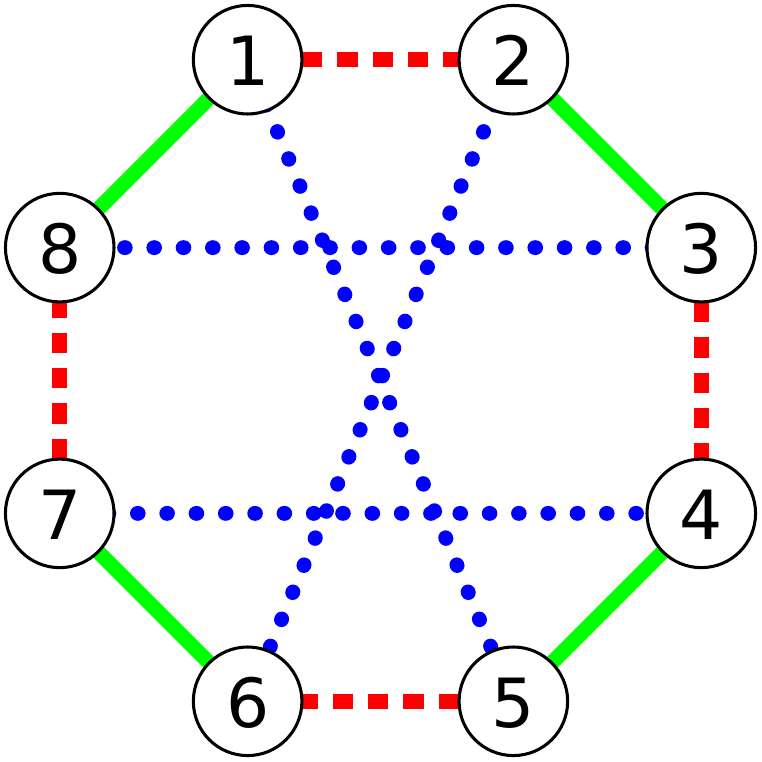}}
         \caption{ }
         \label{fig:c22}
\end{subfigure}

\caption{The two crossing-matching structures with three colours.}
\label{fig:eightv_c2}
\end{figure}

\textbf{Case 2.1.3.} This is the case shown in \cref{fig:c13}. Now consider the fourth colour, which is brown. 

\textbf{Case 2.1.3.1} If brown is not crossing, no brown monochromatic perfect matching contains a crossing edge. The union of any selected brown monochromatic perfect matching with the fixed red-green Hamiltonian cycle must then be one of the three structures from \cref{fig:eightv_c1}. Further, as all antipodal pairs have a blue edge between them, the union of a brown perfect matching with a Hamiltonian cycle with alternating red and green colours must be like that of \cref{fig:c11}, which was already proven to be not possible.

\textbf{Case 2.1.3.2} If brown is crossing, select a brown monochromatic perfect matching containing a crossing edge. By the crossing-edge derivation above, its union with the fixed red-green Hamiltonian cycle must be one of the two structures from \cref{fig:eightv_c2}. Further, as all antipodal pairs have a blue edge between them, the union of a brown perfect matching with a Hamiltonian cycle with alternating red and green colours must be like that of \cref{fig:c21}. Now consider the non-monochromatic vertex colouring in which $1,2,5,6$ get the colour brown, and $3,4,7,8$ get the colour blue. From \cref{colour_split_lemma} and \cref{partition_split_lemma}, we know that the weight of such a vertex colouring is $w((1,6))w((2,5))w((3,7))w((4,8))\ne 0.$ Contradiction.

\textbf{Case 2.2.} In this case, every non-redundant colour other than red and green is crossing.
Select a blue monochromatic perfect matching and a brown monochromatic
perfect matching, each containing a crossing edge.
The crossing-edge derivation above applies separately to both selected
matchings, so each has one of the two structures in \Cref{fig:eightv_c2},
up to a symmetry of the fixed red-green Hamiltonian cycle.
Their combined configurations are shown in \Cref{fig:eightv_c4}.

\begin{figure}[htbp]
    \centering
\centering    
\begin{subfigure}[b]{0.3\textwidth}
         \centering
         \fbox{\includegraphics[width=\dimexpr\linewidth-2\fboxsep-2\fboxrule\relax]{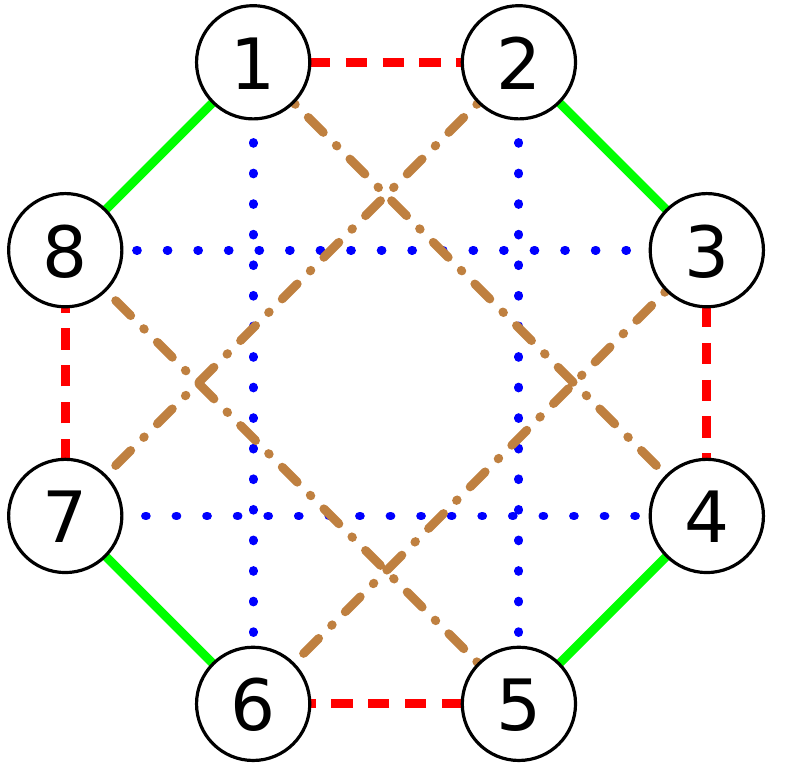}}
         \caption{Case 2.2.1}
         \label{fig:c41}
\end{subfigure}
\hfill
\begin{subfigure}[b]{0.3\textwidth}
         \centering
         \fbox{\includegraphics[width=35mm]{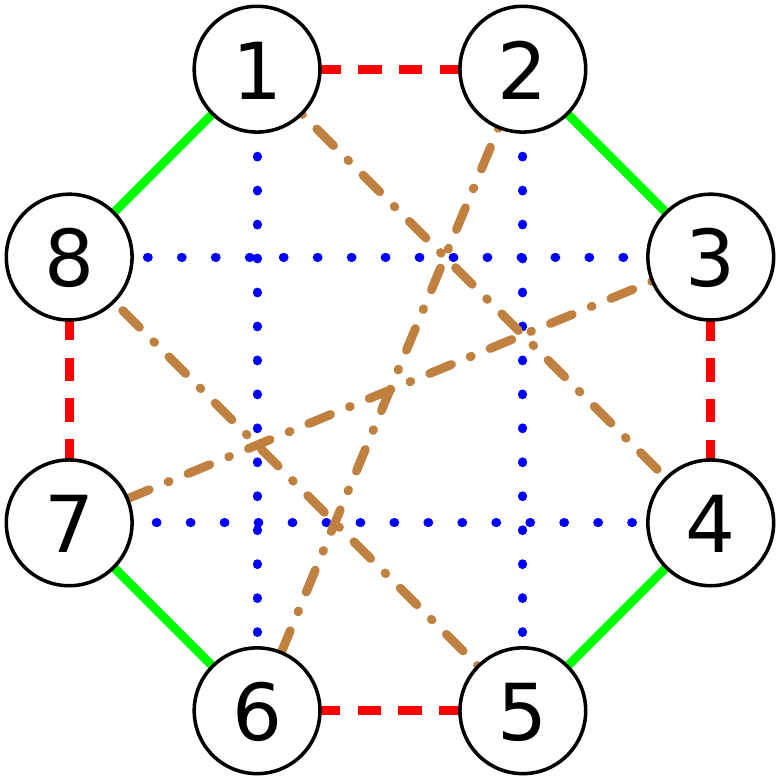}}
         \caption{Case 2.2.2 }
         \label{fig:c42}
\end{subfigure}
\hfill
\begin{subfigure}[b]{0.3\textwidth}
         \centering
         \fbox{\includegraphics[width=\dimexpr\linewidth-2\fboxsep-2\fboxrule\relax]{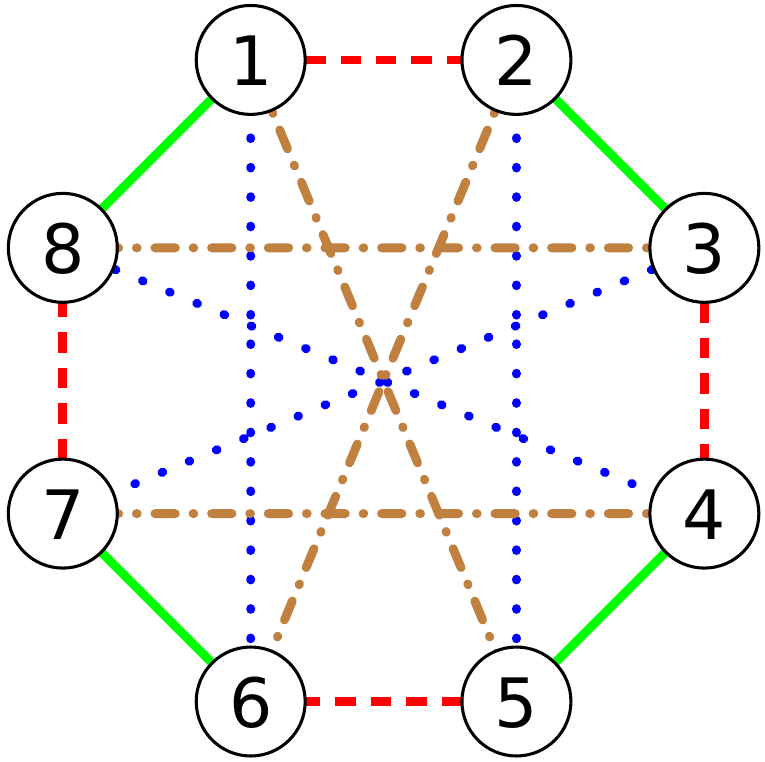}}
         \caption{Case 2.2.3}
         \label{fig:c43}
\end{subfigure}

\caption{Case 2.2 with $4$ colours}
\label{fig:eightv_c4}
\end{figure}

The selected blue and brown matchings are edge-disjoint because the graph
is simple. Up to the symmetries of the
alternating Hamiltonian cycle and interchange of the names blue and brown, the
possible pairs of blue and brown matchings are exactly the three configurations
shown in \cref{fig:eightv_c4}.

\textbf{Case 2.2.1 and Case 2.2.2} (\cref{fig:c41} and \cref{fig:c42}) {In both configurations, the following argument applies.  Consider the non-monochromatic vertex colouring in which the vertices $5,6,7,8$ are red, $1,4$ are brown, and $2,3$ are green. From \cref{colour_split_lemma}, this is possible only if the edges $(5,7)$ and $(6,8)$ are red. Now consider the non-monochromatic vertex colouring in which vertices $4,5,6,7$ are green, vertices $1,2$ are red, and vertices $8,3$ are blue. From \cref{colour_split_lemma}, this is possible only if the edges $(5,7)$ and $(6,4)$ are green. But $(5,7)$ is already red. Contradiction.}

\textbf{Case 2.2.3} (\cref{fig:c43}) Consider the vertex colouring in which $1,2,5,6$ get the colour red and $3,4,7,8$ get the colour blue. From \cref{colour_split_lemma} and \cref{partition_split_lemma}, the weight of the vertex colouring is $w((1,2))w((5,6))w((3,7))w((4,8))\neq 0$. Contradiction.

Therefore, $\mu(G) \leq 3$.

\end{proof}

\clearpage
\bibliography{references}

\end{document}